\newcommand{\defeq}{\vcentcolon=}
\newcommand{\eqdef}{=\vcentcolon}
\tikzset{>=latex'}
\def\endthebibliography{%
  \def\@noitemerr{\@latex@warning{Empty `thebibliography' environment}}%
  \endlist
}
\DeclarePairedDelimiter\normV{\lVert}{\rVert}
\newcommand{\norm}[1]{\left\lVert#1\right\rVert}
\theoremstyle{remark}
\newtheorem{theorem}{ {Theorem}}
\newtheorem{proposition}{{Proposition}}
\DeclareMathOperator{\EX}{\mathbb{E}}
\def\hexagonsize{1cm}
\newcommand{\iPhone}[3]{
	\coordinate (a) at (#1,#2);
	\draw [line width=0.25pt,rounded corners=(#3)*1mm,fill=white,scale=(#3)] (a)--($(a)+(0.67,0)$)--($(a)+(0.67,1.381)$)--($(a)+(0,1.381)$)--cycle;
	\draw [color=gray,line width=0.25pt,rounded corners=(#3)*0.8mm,fill=white,scale=(#3)] ($(a)+(0.015,0.015)$)--($(a)+(0.655,0.015)$)--($(a)+(0.655,1.366)$)--($(a)+(0.015,1.366)$)--cycle;
	\draw [line width=0.25pt,rounded corners=(#3)*0.04mm,scale=(#3)] ($(a)+(0.2875,1.266)$)--($(a)+(0.3825,1.266)$)--($(a)+(0.3825,1.281)$)--($(a)+(0.2875,1.281)$)--cycle;
	\draw[line width=0.25pt,scale=#3] ($(a)+(0.335,0.09)$) circle (0.055cm);
	\draw[line width=0.25pt,scale=#3] ($(a)+(0.335,0.09)$) circle (0.044cm);
	\draw[line width=0.25pt,scale=#3] ($(a)+(0.2275,1.2735)$) circle (0.015cm);
	\draw[line width=0.25pt,scale=#3] ($(a)+(0.335,1.32)$) circle (0.01cm);
	\draw [fill={rgb:black,1;white,4},line width=0.25pt,scale=(#3)] ($(a)+(0.042475,0.170195)$)--($(a)+(0.042475,0.170195)+(0.58505,0.0)$)--($(a)+(0.042475,0.170195)+(0.58505,1.04061)$)--($(a)+(0.042475,0.170195)+(0.0,1.04061)$)--cycle;
}
\newcommand{\basestation}[3]{
	\coordinate (a) at (#1,#2);
	\draw[line width=(#3)*1.5pt,scale=#3] ($(a)+(0, -1.08)$) -- ($(a)+(0, 1)$);
	\draw[line width=(#3)*1.5pt,scale=#3] ($(a)+(0,1)$) .. controls ($(a)+(-0.25,-0.5)$) .. ($(a)+(-0.5,-0.9)$);
	\draw[line width=(#3)*1.5pt,scale=#3] ($(a)+(0,1)$) .. controls ($(a)+(0.25,-0.5)$) .. ($(a)+(0.5,-0.9)$);
	\draw[line width=(#3)*1.5pt,scale=#3] ($(a)+(0, -0.9)$) -- ($(a)+(-0.35, -0.7)$);
	\draw[line width=(#3)*1.5pt,scale=#3] ($(a)+(0, -0.9)$) -- ($(a)+(0.35, -0.7)$);
	\draw[line width=(#3)*0.75pt,scale=#3] ($(a)+(-0.35, -0.65)$) -- ($(a)+(0, -0.5)$);
	\draw[line width=(#3)*0.75pt,scale=#3] ($(a)+(0.35, -0.65)$) -- ($(a)+(0, -0.5)$);
	\draw[line width=(#3)*1pt,scale=#3] ($(a)+(0, -0.6)$) -- ($(a)+(-0.3, -0.45)$);
	\draw[line width=(#3)*1pt,scale=#3] ($(a)+(0, -0.6)$) -- ($(a)+(0.3, -0.45)$);
	\draw[line width=(#3)*0.5pt,scale=#3] ($(a)+(-0.3, -0.45)$) -- ($(a)+(0, -0.32)$);
	\draw[line width=(#3)*0.5pt,scale=#3] ($(a)+(0.3, -0.45)$) -- ($(a)+(0, -0.32)$);
	\draw[line width=(#3)*0.75pt,scale=#3] ($(a)+(0, -0.3)$) -- ($(a)+(-0.22, -0.17)$);
	\draw[line width=(#3)*0.75pt,scale=#3] ($(a)+(0, -0.3)$) -- ($(a)+(0.22, -0.17)$);
	\draw[line width=(#3)*0.5pt,scale=#3] ($(a)+(-0.22, -0.17)$) -- ($(a)+(0, -0.07)$);
	\draw[line width=(#3)*0.5pt,scale=#3] ($(a)+(0.22, -0.17)$) -- ($(a)+(0, -0.07)$);;
	\draw[line width=(#3)*0.75pt,scale=#3] (a) -- ($(a)+(-0.18, 0.11)$);
	\draw[line width=(#3)*0.75pt,scale=#3] (a) -- ($(a)+(0.18, 0.11)$);
	\draw[line width=(#3)*0.5pt,scale=#3] ($(a)+(-0.18, 0.11)$) -- ($(a)+(0,0.2)$);
	\draw[line width=(#3)*0.5pt,scale=#3] ($(a)+(0.18, 0.11)$) -- ($(a)+(0,0.2)$);
	\draw[line width=(#3)*0.5pt,scale=#3] ($(a)+(0, 0.3)$) -- ($(a)+(-0.1, 0.37)$);
	\draw[line width=(#3)*0.5pt,scale=#3] ($(a)+(0, 0.3)$) -- ($(a)+(0.1, 0.37)$);
	\draw[line width=(#3)*0.25pt,scale=#3] ($(a)+(-0.1, 0.37)$) -- ($(a)+(0, 0.43)$);
	\draw[line width=(#3)*0.25pt,scale=#3] ($(a)+(0.1, 0.37)$) -- ($(a)+(0, 0.43)$);
	\draw[line width=(#3)*0.75pt,scale=#3] ($(a)+(0, 1.2)$) -- ($(a)+(0,1)$);
	\draw[fill=white,scale=#3] ($(a)+(0, 1.2)$) circle (0.05cm);
	\draw[line width=(#3)*1pt,decorate,decoration=expanding waves,decoration={segment length=(#3)*10pt},scale=#3] ($(a)+(0, 1.2)$) -- ($(a)+(0, 2.5)$);
}
\newcommand{\alert}[1]{\textcolor{black}{#1}}
\begin{document}
%
\title{Interference Mitigation via Rate-Splitting and Common Message Decoding in Cloud Radio Access Networks}


%
\author{\IEEEauthorblockN{Alaa Alameer Ahmad\IEEEauthorrefmark{1}, \textit{Student Member, IEEE},
Hayssam Dahrouj\IEEEauthorrefmark{2}, \textit{Senior Member, IEEE},
Anas Chaaban \IEEEauthorrefmark{3},  \textit{Senior Member, IEEE},\\
Aydin Sezgin\IEEEauthorrefmark{1}, \textit{Senior Member, IEEE} and
Mohamed-Slim Alouini\IEEEauthorrefmark{4}, \textit{Fellow, IEEE}}
\thanks{Part of this paper was presented at the IEEE International Workshop on Signal Processing Advances in Wireless Communications (SPAWC), Kalamata, Greece, June 2018 \cite{8445778}.\newline A.~A.~Ahmed and A.~Sezgin are with the Department of Electrical Engineering Ruhr-university Bochum, Germany (Email: $\left\lbrace\text{alaa.alameerahmad, aydin.sezgin}\right\rbrace$@rub.de). H.~Dahrouj is with the Department of Electrical and Computer Engineering, Effat University, Saudi Arabia (Email: hayssam.dahrouj@kaust.edu.sa). A.~Chaaban is with the School of Engineering, The University of British Columbia, Kelowna, Canada (Email: anas.chaaban@ubc.ca). M.~S. Alouini is with the Communication Theory Lab, King Abdullah University of Science and Technology, Thuwal, Saudi Arabia (Email: slim.alouini@kaust.edu.sa).}
}

\maketitle

%
\begin{abstract}
Cloud-radio access networks (C-RAN) help overcoming the scarcity of radio resources by enabling dense deployment of base-stations (BSs), and connecting them to a central-processor (CP). This paper considers the downlink of a C-RAN, where the cloud is connected to the BSs via limited-capacity backhaul links. We propose and optimize a C-RAN transmission scheme that combines rate splitting, common message decoding, beamforming vectors design and clustering. To this end, the paper optimizes a transmission scheme that combines rate splitting (RS), common message decoding (CMD), clustering and coordinated beamforming. In this work we focus on maximizing the weighted sum-rate subject to per-BS backhaul capacity and transmit power constraints, so as to jointly determine the RS-CMD mode of transmission, the cluster of BSs serving private and common messages of each user, and the associated beamforming vectors of each user private and common messages. The paper proposes solving such a complicated non-convex optimization problem using $l_0$-norm relaxation techniques, followed by inner-convex approximations (ICA), so as to achieve stationary solutions to the relaxed non-convex problem. Numerical results show that the proposed method provides significant performance gain as compared to conventional interference mitigation techniques in C-RAN which simply treat interference as noise (TIN).\\

\end{abstract}
\section{Introduction}
\subsection{Overview}
{\color{black} Motivated by the scarcity of radio resources and the ever increasing need for higher data rates and reliable wireless services, C-RAN provides a practical network architecture capable of boosting the spectral and energy efficiency in next generation wireless systems (5G and beyond) \cite{6824752, Le2015,6736747}. By connecting many BSs to the CP, C-RANs enable spatial reuse through dense deployment of small cells, and exploit the emerging cloud-computing technologies for managing large networks \cite{6923535, 7487951}.
	
	With ultra dense deployment of small cells, the distance between the base station (BS) and the end user decreases, which results in a better quality of the direct channel. This comes, however, at the cost of increasing inter-BS interference due to proximity of the BSs in neighbouring cells. Furthermore, in C-RAN, the performance of the system is also limited by the finite capacity of backhaul links, \cite{6786060, 7096298, 7105959, 7488289, 7499119,7593140,7925639}. Intuitively, in the extreme case when the backhaul capacity goes to infinity, the C-RAN is equivalent to a broadcast channel (BC). In the other extreme in which the backhaul links have zero-capacity, the C-RAN becomes equivalent to an interference channel (IC), the capacity of which is still a well-known open problem, even for the simple two-user IC, where treating interference as noise (TIN) is known to be a suboptimal strategy, especially in high-interference regimes \cite{Carleial78,1056307,4675741,7422790}. {With limited backhaul capacity, C-RAN bridges the two extremes}. {With this observation in mind, we investigate in this paper a transmission scheme which improves the performance of C-RAN in different regimes, i.e., in backhaul limited regimes and interference limited regimes.}
	
	In the rate splitting strategy, initially introduced by \cite{Carleial78} for the IC, the message of each user is split into two parts: a private part decodable at the intended user only and a common part which can be decoded by other user. Such a strategy is shown to
	approach the capacity region of the IC in the seminal works of \cite{1056307, 4675741}. 
	Motivated by this fact, this paper studies rate-spitting in the realm of a C-RAN. It proposes splitting the message of each user into two parts, a private part decodable at the intended user only, and a common part which can be decoded at a subset of users.
	
	Since the CP is connected to the BSs in cloud-enabled networks, C-RAN becomes a particularly suitable platform for the physical implementation of rate-splitting strategies. In the context of our paper, all rate splitting and common message decoding (RS-CMD) techniques are adopted for the sole purpose of reducing large-scale interference. As the CP is connected to the BSs via finite capacity backhaul links, it becomes equally important to determine the set of BSs (i.e., cluster) which serves each user, jointly with selecting the mode of transmission of each user (i.e., private, common, or both).
	
	This work considers the RS-CMD problem in the downlink of a C-RAN, where the CP is connected to several BSs, each equipped with multiple antennas. The CP applies central encoding to user's messages and establishes cooperation between a cluster of BSs by joint design of linear precoding in a user-centric clustering fashion, also known as \textit{data-sharing} strategy \cite{Simeone2009, 6588350, 5997324, 6920005}, as it achieves a better performance compared to classical transmission schemes \cite{7809154}.  The paper then considers the problem of maximizing the weighted sum-rate (WSR) across the network, subject to per-BS backhaul capacity and transmission power constraints. The goal of this optimization is to jointly determine the RS-CMD mode of transmission, the cluster of BSs serving private and common messages of each user, and the associated beamforming vectors of each user private and common information. The paper provides an in-depth numerical investigation of the impact of RS-CMD strategy on the achievable rate in C-RANs, and compares it with the conventional strategies which treat interference as noise.
	
	\subsection{Related Work}
	The contributions of this paper are related to works on rate splitting and common message decoding, clustering, and beamforming; topics which are studied in the literature of wireless systems, both individually and separately.
	
	In rate-splitting schemes, the data of each user is divided into two parts: a private message which is decoded only at the intended user, and a common message which is decodable at the intended user and a subset of the unintended users. Reference \cite{1056307} shows that such a RS-CMD technique leads to the largest known achievable rate-region in a 2-user IC. Such splitting strategy is further shown in \cite{4675741} to achieve rates within one-bit from the capacity of the 2-user IC. Although being based on simple networks, those information-theoretical studies show the benefits of using RS-CMD techniques in high interference regimes. For instance, inspired by the theoretical works in \cite{1056307, 4675741}, the authors in \cite{5910112} generalize this RS-CMD scheme to a practical multi-cell network showing significant achievable rate improvement by jointly designing the beamforming vectors for private and common information in RS-CMD as compared to beamforming design using TIN. In \cite{6125368}, the authors apply RS ideas to a practical setup of heterogeneous wireless networks. The results in \cite{6125368} suggest that a significant performance gain can be reached by applying RS as compared to rank-1 coordinated beamforming schemes that adopt TIN strategy. The work in \cite{7101259} uses common message decoding and successive interference cancellation techniques to maximize the sum rate in multi-cell multi-user MIMO system. The difference of convex optimization technique is used to efficiently solve the difficult underlying optimization problem.
	Recently, RS-CMD has also gained a noteworthy attention in the literature of medium access schemes. For instance, the authors in \cite{journals/corr/abs-1710-11018} propose a novel RS multiple access (RSMA) scheme, which generalizes and outperforms conventional multiple access schemes such as Space-Division Multiple Access (SDMA) and Non-Orthogonal Multiple Access (NOMA). Based on these results, the authors in \cite{8491100} show that RSMA is more energy efficient than SDMA and NOMA. Reference \cite{2018arXiv180205567M}, on the other hand, shows that linearly precoded RS is more efficient than the conventional Multi-User Linear Precoding (MU-LP) in terms of spectral and energy efficiency. Through numerical simulations, the authors in \cite{2018arXiv180205567M} particularly show that, with no increase in receiver complexity, RS achieves better performance metrics as compared to both NOMA and MU-LP systems.
	The above works, i.e., \cite{1056307, 4675741, 5910112,6125368,7101259,journals/corr/abs-1710-11018,8491100,2018arXiv180205567M }, however, do not address cloud-enabled scenarios, as they ignore the physical-layer considerations induced by RS-CMS in C-RANs, and do not account for determining the set of common messages. This paper, therefore, focuses on the study of the joint resource allocation problem in C-RAN, together with evaluating the impact of RS-CMD techniques. The paper further develops a well-chosen heuristic procedure to determine the set of common messages that each user needs to decode.
	
	In general, most of the existing works (e.g., \cite{4443878,5463229,5594708,5935083,5962539,6512541}) on multi-cell interference mitigation in practical networks focus on doing so through jointly allocating resources (e.g., beamforming vectors and transmit power) in order to maximize a network utility. References \cite{4443878,5463229,5594708,5935083,5962539,6512541}, however, often adopt the strategy of TIN and assume an infinite backhaul capacity. Towards this end, the impact of finite backhaul links capacity is studied in the downlink of C-RAN in \cite{6920005}. The problem studied in \cite{6920005} turns out to be a mixed-integer non linear problem (MINLP), which is solved by relaxing the discrete non-convex per-BS backhaul constraints using re-weighted $l_1$-norm, and then by applying a generalized weighted minimum mean square algorithm (WMMSE). The authors in \cite{6786060} consider the joint design of BSs' clusters and beamforming vectors to minimize the network-wide transmit power cost. The trade-off between the backhaul traffic and transmit power is also investigated in references \cite{6920005, 7105959,7488289,7499119,7593140, 7925639}, all of which adopt TIN to decode the received messages. At this point, it becomes essential to investigate how adopting RS-CMD can influence the design of clusters of BSs and the beamforming vectors associated with the private and common messages in a C-RAN setup. Towards this end, our current paper investigates the downlink C-RAN by utilizing a RS-CMD strategy, and focuses on evaluating its impact on jointly optimizing the beamforming vectors, the clustering and the transmission mode, so as to maximize the weighted-sum rate (WSR) across the network. To the best of authors' knowledge, this is the first work on C-RAN which studies both the application of RS-CMD coupled with joint clustering and beamforming, and numerically illustrates the potential gain provided by RS-CMD over TIN.
\subsection{Contributions}
In this paper, we propose using RS-CMD in downlink C-RAN to jointly design user centric clusters of BSs, so as to explore RS-CMD benefits in large-scale interference management. We formulate a WSR maximization problem subject to per-BS backhaul capacity and per-BS transmit power constraints, so as to determine the RS splitting mode, the cluster of BSs which serves each user, and the beamforming vectors associated with the private and common messages parts. Such a problem is generally NP-hard due to its mixed discrete and continuous optimization nature, in additional to the non-convexity of the constraints. Our paper proposes solving such a problem using a heuristic based on $l_0$-norm approximation to tackle the discrete part, followed by a polynomial time algorithm based on inner convex approximations, so as to find a stationary solution to the resulting non-convex continuous problem. The paper subsequently shows the numerical benefits of the proposed RS-CMD scheme in improving the achievable rates in C-RAN compared to the state-of-the art TIN strategy, both in the backhaul-limited and in the interference-limited regimes. Our main contributions can be summarized as follows:
\begin{itemize}
	\item \textit{Common Message Decoding (CMD) Set}:
	We propose a heuristic procedure for { ordering} the set of strongest interferers for each user, which consequently allows for determining the set of common messages to be decoded.
	\item \textit{Clustering}: Based on $l_0$-norm relaxation and inner-convex approximation framework, we propose a dynamic clustering approach. In the context of RS-CMD, we determine the set of BSs serving the private message and the set of BSs which serves the common message for each scheduled user. As opposed to the static clustering scheme described in \cite{6920005}, dynamic clustering procedure forms the clusters by taking into account the CMD set of each user, which can significantly affect the network connectivity. {To deal with the non-convex backhaul constraint, the paper particularly proposes a surrogate convex function to approximate the backhaul constraint. The paper then compensates for such approximations using proper outer-loop updates in an iterative manner.}
	\item \textit{Beamforming}: Even when the clusters are fixed, the WSR problem with RS-CMD is not convex. This is because the private and common rate functions are non-convex in the private and common beamforming vectors, respectively. The paper, therefore, proposes solving such issue using an algorithm that applies well-chosen inner-convex approximations. The proposed algorithm is proven to converge in polynomial time to a stationary solution.
	\item \textit{Numerical Simulations}: We show through extensive numerical simulations that our proposed solution outperforms the classical TIN in C-RAN. In both the interference limited and the backhaul limited regimes, we illustrate that RS-CMD makes a better use of the network resources in order to achieve higher rates as compared with TIN for different network parameters.
\end{itemize}
The rest of the paper is organized as follows. Section II illustrates the system model. Section III introduces the transmission scheme adopted in this work and formulates the WSR problem accordingly. The proposed solution is introduced in section IV. Section V presents the numerical simulations, and section VI concludes the paper.

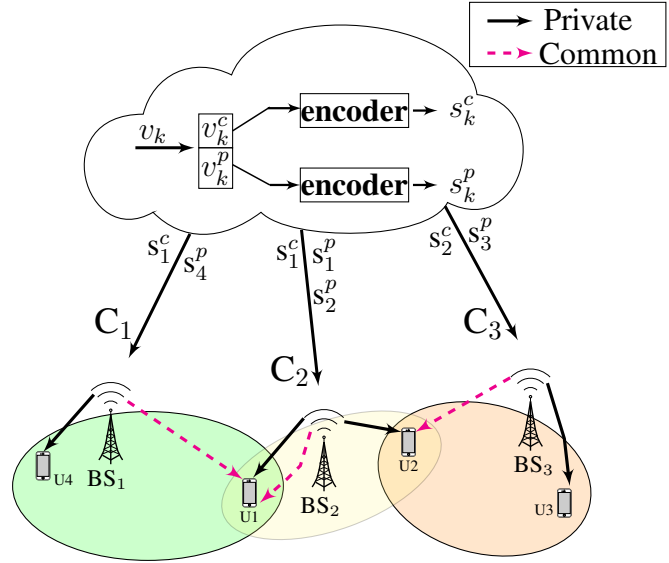
\begin{figure}[t]
	\centering
	\tikzset{every picture/.style={scale=.9}, every node/.style={scale=.9}}

\begin{tikzpicture}
    \tikzstyle{every node}=[font=\tiny]
    \tikzset{
        hexagon/.style={
            regular polygon,
            regular polygon sides=6,
            minimum size=23mm,
            inner sep=0mm,
            outer sep=0mm,
            rotate=0,
            draw
        }

    }

    \draw [black,fill=orange!20, rotate = -15] (2.8,-3.8) ellipse (1.6cm and 1.1cm);
    \draw [black,fill=green!20] (-3.25,-4.5) ellipse (2cm and 1.1cm);
    \draw [black,fill=yellow!60, rotate around={110:(0.75,-3.5)}, opacity=.2] (0.4,-2.0) ellipse (0.8cm and 1.7cm);

    \draw [->, very thick] (-4.05,-3.15) -- (-4.8,-4.0); 
    \draw [->, very thick, dashed,magenta] (-3.55,-3.25) -- (-1.85,-4.4); 

    \draw [->, very thick] (-0.95,-3.5) -- (-1.7,-4.38); 
    \draw [->, very thick, dashed,magenta] (-0.85,-3.7) -- (-1.0,-4.2) -- (-1.2,-4.4) -- (-1.6,-4.75); 
    \draw [->, very thick] (-0.35,-3.55) -- (0.5,-3.7); 

    \draw [->, very thick, dashed,magenta] (2.1,-2.9) -- (0.7,-3.7); 
    \draw [->, very thick] (2.65,-2.98) -- (2.9,-3.8) -- (2.98,-4.5); 

    \node (bs11) at (-5,-4.6){};
    \node (bs31) at (3.2,-3.4){};
    \node (bs21) at (-0.5,-5){};

    \basestation{-3.8}{-3.8}{0.35};
    \node (bs1) at (-3.85,-4.4) {\footnotesize BS$_1$};

    \basestation{-0.65}{-4.2}{0.35};
    \node (bs2) at (-0.7,-4.8) {\footnotesize BS$_2$};

    \basestation{2.4}{-3.6}{0.35};
    \node (bs3) at (2.45,-4.2) {\footnotesize BS$_3$};

    \node[cloud,white, draw,fill=white, minimum width=6.45cm , minimum height=3.2cm, cloud puffs= 12](cloud) at (-1,1){};

    \node at (-1,0.85)(ccc)
    {\begin{tikzpicture}
        \draw[scale = 1.45] (-2.1,-0.7) .. controls (-2.8,-1.1)
        and (-3.2,0.3) .. (-2.2,0.3) .. controls (-2.1,0.7)
        and (-1.7,0.9) .. (-1.3,0.7) .. controls (-1.0,1.5)
        and (1.1,1.3) .. (1.2,0.5) .. controls (2.0,0.4)
        and (1.7,-1) .. (0.9,-0.6) .. controls (0.7,-1)
        and (-0.7,-1) .. (-1.0,-0.7) .. controls (-1.4,-1)
        and (-1.8,-1) .. cycle;
        \end{tikzpicture}};

    \node (c1) at ($(cloud.240)!0.57!(bs11)$) {};
    \node (c2) at ($(cloud.270)!0.57!(bs21)$) {};
    \node (c3) at ($(cloud.310)!0.7!(bs31)$) {};
    %

    \node (temp1) at (-2.55,-0.61) {};
    \draw[->, line width=0.4mm] (temp1) to (c1);

    \node (temp2) at (-1.0,-0.55) {};
    \draw[->, line width=0.4mm] (temp2) to (c2);

    \node (temp3) at (1.05,-0.2) {};
    \draw[->, line width=0.4mm] (temp3) to (c3);

    \draw (-2.5,.5) rectangle ++(.5,.5) node[pos=.5] {\normalsize $v_{k}^c$};
    \draw (-2.5,-.1) rectangle ++(.5,.6) node[pos=.5] {\normalsize $v_{k}^p$};
    \node  at (-3.6,0.5) (h1){};
    \node  at (-2.4,0.5) (h2){};
    \node (wer) at (-2.8,0.7) [anchor=east] {\normalsize $v_k$};
    \draw[->, line width=0.4mm] (h1) to (h2);
    \draw[line width=0.2mm] (-2,0.75) --  ++(0.55,0.35);
    \draw[line width=0.2mm] (-2,0.25) --  ++(0.55,-0.3);
    \draw[->,line width=0.3mm] (-1.45,-.05) --  ++(0.45,0);
    \draw (-1,0.8) rectangle ++(1.6,0.5) node[pos=.5] {{\normalsize \textbf{encoder}}};
    \draw (-1,-.3) rectangle ++(1.6,0.5) node[pos=.5] {{\normalsize \textbf{encoder}}};
    \node  at (0.5,-.05) (h3){};
    \node[align=right]  at (1.4,-.05) (h4){\normalsize \bm $s_{k}^p$};
    \draw[->, line width=0.3mm] (h3) to (h4);
    \node  at (0.5,1.05) (h5){};
    \node  at (1.4,1.05) (h6){\normalsize \bm $s_{k}^c$};
    \draw[->, line width=0.3mm] (h5) to (h6);
    \draw[->,line width=0.3mm] (-1.48,1.085) --  ++(0.5,0);

    \node [align=left] (c1label3) at ($(cloud.218)!0.215!(c1)$) {\normalsize$\text{s}_{1}^c$};
    \node [align=right] (c1label3) at ($(cloud.235)!0.39!(c1)$) {\normalsize$\text{s}_{4}^p$};

    \node [align=right] (c2label1) at ($(cloud.282)!0.15!(c2)$) {\normalsize$\text{s}_{1}^p$};
    \node [align=right] (c2label2) at ($(cloud.262)!0.13!(c2)$) {\normalsize$\text{s}_{1}^c$};
    \node [align=right] (c2label4) at ($(cloud.284)!0.4!(c2)$) {\normalsize$\text{s}_{2}^p$};

    \node [align=right] (c3label1) at ($(cloud.336)!0.29!(c3)$) {\normalsize$\text{s}_{3}^p$};
    \node [align=right] (c3label1) at ($(cloud.320)!0.17!(c3)$) {\normalsize$\text{s}_{2}^c$};

    \node[align=center] (c3temp) at (1.7,-2.1) {\large $\text{C}_3$};

    \node[align=center] (c1temp) at (-3.75,-2.1) {\large $\text{C}_1$};

    \node[align=right] (c2temp) at (-1.15,-2.8) {\large $\text{C}_2$};

    \node at (-1.75,-4.6)(phone1)
    {\begin{tikzpicture}
        \iPhone{0}{1}{0.3};
        \end{tikzpicture}};
    \node at (-1.75,-4.95) {U1};

    \node at (0.6,-3.85)(phone2)
    {\begin{tikzpicture}
        \iPhone{0}{1}{0.3};
        \end{tikzpicture}};
    \node at (0.6,-4.2) {U2};

    \node at (-4.8,-4.2)(phone4)
    {\begin{tikzpicture}
        \iPhone{0}{1}{0.3};
        \end{tikzpicture}};
    \node at (-4.5,-4.4) {U4};

    \node at (2.9,-4.75)(phone5)
    {\begin{tikzpicture}
        \iPhone{0}{1}{0.3};
        \end{tikzpicture}};
    \node at (2.6,-4.85) {U3};

    \node  at (1.6,1.9) (a1){};
    \node  at (2.6,1.9) (a2){};
    \node  at (1.6,2.375) (b1){};
    \node  at (2.6,2.375) (b2){};
    \draw [->, very thick, dashed,magenta] (a1) to (a2);\node[align=left] at (3.4,1.9) {\normalsize Common};
    \draw [->, very thick] (b1) to (b2);\node[align=left] at (3.25,2.4) {\normalsize Private};
    \draw[draw=black] (1.5,1.7) rectangle (4.4,2.65);

\end{tikzpicture}
\caption{A C-RAN system with three cells. Both private and common messages are designed at the cloud.}
\label{Model}
\end{figure}
\section{System Model}
We consider a C-RAN system operating in downlink mode with a transmission bandwidth $B$. The network consists of a set of multi-antenna BSs $\mathcal{N} = \left\lbrace1,2,\ldots,N\right\rbrace$, serving a set of single-antenna users $\mathcal{K} = \left\lbrace1,2,\ldots,K\right\rbrace$. Each BS is equipped with $L\geq 1$ antennas. BS $n \in \mathcal{N}$ is connected to a CP, located at the cloud, via a backhaul link of capacity $C_n$. User $k$ requires a message $v_k$, where the achievable data-rate at user $k$ is denoted by $R_k$. All messages are jointly encoded at the CP into signals $s_k$, $\forall k\in \mathcal{K}$. The CP then shares combinations of $s_k$ (or parts thereof) with the BSs through the backhaul links. This data-sharing is possible if the rate of signals shared with BS $n$ does not exceed the backhaul capacity $C_n$. This is made more explicit when we describe RS in the next section.
\alert{Upon receiving these signals, BS $n$ constructs} $\mathbf{x}_{n} \in \mathbb{C}^{L\times1}$, and sends it according to the following transmit power constraint:

\begin{equation}\label{eq:Po}
\EX\left\lbrace \mathbf{x}_{n}^{H}\mathbf{x}_{n}\right\rbrace \leq P_{n}^{\text{Max}} \quad \forall n \in \mathcal{N},
\end{equation}
where $P_{n}^{\text{Max}}$ is the maximum transmit power available at BS $n$.

Let $\mathbf{h}_{n, k} \in \mathbb{C}^{L\times1}$ denote the channel vector between BS $n$ and user $k$, and $\mathbf{h}_k = \big[\mathbf{h}_{1, k}^{T}, \mathbf{h}_{2, k}^{T},...,\mathbf{h}_{N, k}^{T}\big]^{T}  \in \mathbb{C}^{NL \times 1}$ be the aggregate channel vector of user $k$. We can write the received signal at user $k$ as
\begin{equation}
y_k = \mathbf{h}_{k}^H \mathbf{x} + n_k
\end{equation}
where $n_k \sim \mathcal{CN}\left(0, \sigma^2\right)$ is the additive white Gaussian noise (AWGN), and $\mathbf{x} = [\mathbf{x}_{1}^{T},\ldots,\mathbf{x}_{N}^{T}]^{T}$.

For mathematical tractability, the paper assumes that the CP has complete knowledge of the instantaneous channel state information (CSI) of all BSs. We further adopt a block-based transmission model, where each transmission block consists of several time slots. The channel fading coefficients remain constant within one block, but may vary independently from one block to another.
Next, we describe our proposed scheme which is based on RS-CMD, and we formulate the WSR optimization problem accordingly.

\section{Transmission Scheme and Problem Formulation}
The proposed transmission scheme consists of RS, joint beamforming and data-sharing, and successive common message decoding. We start by describing RS.

\subsection{Rate Splitting}
The CP first splits the message of user $k$, i.e., $v_k$, into a private message denoted by $v_{k}^p$, and a common message denoted by $v_{k}^c$. Afterwards, the CP encodes the private and common messages into $s_{k}^p$ and $s_{k}^c$, respectively, as illustrated in Fig. \ref{Model}. The coded messages $s_{k}^p$ and $s_{k}^c$ are assumed to be i.i.d. circularly symmetric complex Gaussian with zero mean and unit variance. Their respective rates are denoted by $R_k^p$ and $R_k^c$, and so $R_k=R_k^p+R_k^c$, where $R_k$ is the rate of user $k$.

\subsection{Beamforming, Signal Construction, and Data-Sharing}
{When adopting the data-sharing strategy in a downlink mode in C-RAN which applies RS-CMD, the CP shares the encoded private and common messages directly with their respective cluster of BSs.
}
Let $\mathcal{K}_n^p,\mathcal{K}_n^c \subseteq \mathcal K$ be the subset of users served by BS $n$ with a private or common message, respectively, i.e.,
\begin{align}
\mathcal{K}_{n}^{p} &\defeq \left\lbrace k \in \mathcal{K}|\hspace{1mm} \text{BS \textit{n} delivers}\hspace{0.5mm} s_{k}^p \hspace{1mm} \text{to user}\hspace{0.5mm} k\right\rbrace,\\
\mathcal{K}_{n}^{c} &\defeq \left\lbrace k \in \mathcal{K}|\hspace{1mm} \text{BS \textit{n} delivers}\hspace{0.5mm} s_{k}^c \hspace{1mm} \text{to user}\hspace{0.5mm} k\right\rbrace.
\end{align}
Moreover, let the beamformers used by BS $n$ to send $s_k^p$ and $s_k^c$ to user $k$ be denoted by $\mathbf{w}_{n,k}^p$ and $\mathbf{w}_{n,k}^c$, respectively. \alert{Then, the CP sends $\left\lbrace {s}_{k}^p| \forall k \in \mathcal{K}_{n}^{p}\right\rbrace $, $\left\lbrace {s}_{k}^c| \forall k \in \mathcal{K}_{n}^{c}\right\rbrace $ and their beamforming vectors over the backhaul links to BS $n$. Due to the finite backaul capacity $C_n$ limits, the transmission rate is subject to the following backhaul capacity constraint\footnote{\alert{We ignore the overhead due to sending the beamformers since these need to be sent only when CSI changes}.}:
	\begin{equation} \label{eq:S4}
	\sum\limits_{k \in \mathcal{K}_{n}^p} R_{k}^p + \sum_{k \in \mathcal{K}_{n}^c} R_{k}^c\leq C_n, \quad \forall n\in \mathcal{N}
	\end{equation}
	BS $n$ then constructs $\mathbf{x}_n$ as follows:}
\begin{equation}\label{eq:Tr}
\mathbf{x}_n = \sum_{k\in\mathcal{K}_n^p}\mathbf{w}_{n,k}^p {s}_{k}^p + \sum_{k\in\mathcal{K}_n^c}\mathbf{w}_{n,k}^c {s}_{k}^c.
\end{equation}

Using the expression of the transmit signal \eqref{eq:Tr}, one can rewrite the power constraint \eqref{eq:Po} as follows:
\begin{equation}\label{eq:S}
\sum\limits_{k \in \mathcal{K}}\left({\normV[\big]{\mathbf{w}_{n, k}^p}_{2}^{2}} + {\normV[\big]{\mathbf{w}_{n, k}^c}_{2}^{2}}\right) \leq P_{n}^{\text{Max}}, \quad \forall n\in \mathcal{N}.
\end{equation}

The private (common) message of user $k$ is served by BS $n$, if the corresponding beamforming vector $\mathbf{w}_{n,k}^p$ ($\mathbf{w}_{n,k}^c$) is non-zero. This can be equivalently expressed in terms of the indicator function as follows:
\begin{equation}
\mathds{1}\left\lbrace \normV[\big]{\mathbf{w}_{n, k}^o}_{2}^{2}\right\rbrace =  \begin{cases}
1 \quad \text{if} \quad &\normV[\big]{\mathbf{w}_{n, k}^o}_{2}^{2} > 0 \\
0 \quad  \quad &\text{otherwise}
\end{cases}
\end{equation}
where, $o \in \left\lbrace p, c \right\rbrace $.
Without loss of generality, the above indicator function can be written as a function of an $l_0$-norm notation \footnote{$l_0$-norm of a vector is the number of non-zero elements in this vector. } , i.e., as $\mathds{1}\left\lbrace \normV[\big]{\mathbf{w}_{n, k}^o}_{2}^{2} \right\rbrace =  \normV[\Big]{\normV[\big]{\mathbf{w}_{n, k}^o}_{2}^{2}}_0$. This is this case since, in the scalar case, the $l_0$-norm definition coincides with the definition of the indicator function, because the power transmitted from BS $n$ to user $k$ is a positive scalar, i.e, $\normV[\big]{\mathbf{w}_{n, k}^o}_{2}^{2} \in \mathbb{R}_{+}$.
The subset of users served with private and common messages from BS $n$ can, therefore, be expressed as:
\begin{align}
\mathcal{K}_{n}^{p} & = \left\lbrace k|\quad \normV[\Big]{\normV[\big]{\mathbf{w}_{n, k}^p}_{2}^{2}}_0 = 1 \right\rbrace,\\
\mathcal{K}_{n}^{c} & =  \left\lbrace k|\quad \normV[\Big]{\normV[\big]{\mathbf{w}_{n, k}^c}_{2}^{2}}_0 = 1 \right\rbrace.
\end{align}
The above expressions allow to re-express the backhaul constraint \eqref{eq:S4} in the following compact form:
\begin{equation} \label{eq:S11}
\sum\limits_{k \in \mathcal{K}}\left(\normV[\Big]{\normV[\big]{\mathbf{w}_{n, k}^p}_{2}^{2}}_0 R_{k}^p + \normV[\Big]{\normV[\big]{\mathbf{w}_{n, k}^c}_{2}^{2}}_0 R_{k}^c \right) \leq C_n, \quad \forall n\in \mathcal{N}.
\end{equation}
\subsection{Successive Decoding}
At this step, the received signal at user $k$ can be written as
\begin{align*}
y_k =& \mathbf{h}_{k}^H\left(\mathbf{w}_{k}^p {s}_{k}^p + \mathbf{w}_{k}^c {s}_{k}^c\right) + \sum_{j \in \mathcal{K}\setminus \{k\}}\mathbf{h}_{k}^H\left(\mathbf{w}_{j}^p {s}_{j}^p + \mathbf{w}_{j}^c {s}_{j}^c\right)\\& + n_k,
\end{align*}
where $\mathbf{w}_{k}^p = [(\mathbf{w}_{1, k}^{p})^{T},...,(\mathbf{w}_{N, k}^{p})^{T}]^T$ is the aggregate beamforming vector associated with ${s}_{k}^p$, i.e., the private message of user $k$. Similarly, $\mathbf{w}_{k}^c$ is the aggregate beamforming vector associated with ${s}_{k}^c$, i.e., the common message of user $k$.

In the context of this paper, using common messages is adopted for the sole purpose of mitigating interference in C-RANs. Thus, the order in which user $k$ decodes the intended messages plays an important role in assessing the efficiency of the relevant proposed interference mitigation techniques. Although joint decoding of all common and private messages at user $k$ would result in optimized rates, its implementation is complicated in practice, in particular when the network and the intended set of messages to be decoded by each user are large. The classical information theoretical results of a 2-user IC, however, already suggest that decoding a strong interferer's common message can significantly improve a user's achievable rate \cite{4675741}. From this perspective, in this paper, we focus on a successive decoding strategy, wherein user $k$ decodes a subset of all common messages in a fixed decoding strategy, based on the descending order of the channel gains of the interferers, as described next.

Let $\mathcal{M}_k$ denote the set of users which decode $s_k^c$, i.e.:
\begin{equation}
\mathcal{M}_k \defeq \left\lbrace j \in \mathcal{K}| \hspace{1mm}\text{user} \hspace{1mm}j\hspace{1mm}  \text{decodes} \hspace{1mm}s_{k}^c \right\rbrace.
\end{equation}
The set of common messages that user $k$ would decode is then defined as:
\begin{align}
\Phi_{k} \defeq \left\lbrace j \in \mathcal{K}|\hspace{1mm} k \in \mathcal{M}_{j} \right\rbrace.
\end{align}
We note that once the set $\mathcal{M}_k$ is found, we can determine the set $\Phi_{k}$, and vice-versa.
The choice of $\Phi_{k}$ (and consequently $\mathcal{M}_k$) has a crucial impact on the achievable rate of user $k$. In this paper, we design $\Phi_{k}$ (and $\mathcal{M}_k$) in a heuristic fashion, which is based on the order of the interfering channel gains.

Consider the following decoding order at user $k$:
{\color{black}
	\begin{equation}
	\pi_{k}(j): \left\lbrace1, 2, \ldots,\left|\Phi_{k}\right|  \right\rbrace  \rightarrow  \Phi_{k} \nonumber,
	\end{equation}
	which represents a permutation of an ordered set with cardinality of $\left|\Phi_{k}\right|$,}
i.e., $\pi_{k}(j)$ is the successive decoding step in which the message $j \in \Phi_k$ is decoded at user $k$. { In other terms, $\pi_{k}(j_1) > \pi_{k}(j_2)$ (where $j_1 \neq j_2$) implies that user $k$ decodes the common message
	of user $j_1$ first, and then the common message of user user $j_2$.}
Now, write $\mathbf{y}_k$, the received signal at user $k$, as follows,
\begin{align}\label{eq:S2}
y_k = & \underbrace{\left(\mathbf{h}_{k}^H\mathbf{w}_{k}^p {s}_{k}^p + \sum_{j \in \Phi_{k}}\mathbf{h}_{k}^H\mathbf{w}_{j}^c {s}_{j}^c\right)}_{\text{Signals to be decoded}} \nonumber \\ & + \underbrace{\sum_{j \in \mathcal{K}\setminus k}\mathbf{h}_{k}^H\mathbf{w}_{j}^p {s}_{j}^p + \sum_{l \in \mathcal{K} \setminus \Phi_{k}}\mathbf{h}_{k}^H\mathbf{w}_{l}^c {s}_{l}^c+ n_k.}_\text{Interference plus noise}
\end{align}
{ Since finding the optimal decoding order is obviously a challenging problem for its combinatorial nature, we herein propose a practical successive decoding strategy instead. The idea is to fix the decoding order according to channel strength in descending order as follows: $\normV[]{\mathbf{h}_{\pi_{k}(1)}}\geq\normV[]{\mathbf{h}_{\pi_{k}(2)}} \geq \ldots \geq \normV[]{\mathbf{h}_{\pi_{k}(\left|\Phi_{k}\right|)}}$. Such decoding strategy helps the users whose common messages are decoded achieving better common rates. Although the proposed decoding technique does not provide the global optimal solution to the problem, the simulations section of the paper later illustrate how that such a decoding order indeed provides an appreciable gain as compared to the conventional private-information transmission only, i.e., TIN.}
\subsection{Achievable Rate}
Let $\Gamma_{k}^p, \Gamma_{k,i}^c$ denote the signal to interference plus noise ratios (SINR's) of user $k$, when decoding its private message and the common message of user $i$, respectively. Based on equation \eqref{eq:S2}, we can write:
\begin{align}
\label{eq:S3}
\Gamma_{k}^p &= \frac{\left|\mathbf{h}_{k}^H\mathbf{w}_{k}^p \right|^2}{\sum\limits_{j \in \mathcal{K}\setminus k}\left|\mathbf{h}_{k}^H\mathbf{w}_{j}^p \right|^2 + \sum\limits_{l \in \mathcal{K} \setminus \Phi_{k}}\left|\mathbf{h}_{k}^H\mathbf{w}_{l}^c \right|^2 + \sigma^2}\\
\label{eq:S6}
\Gamma_{k, i}^c &= \frac{\left|\mathbf{h}_{k}^H\mathbf{w}_{i}^c \right|^2}{T_k + \sum\limits_{l \in \mathcal{K} \setminus \Phi_{k}}\left|\mathbf{h}_{k}^H\mathbf{w}_{l}^c \right|^2 + \sum\limits_{\substack{m  \in \Phi_{k}\\ \pi_{k}(m)> \pi_{k}(i)}}\left|\mathbf{h}_{k}^H\mathbf{w}_{m}^c \right|^2}
\end{align}
where $T_k = \sum_{j \in \mathcal{K}}\left|\mathbf{h}_{k}^H\mathbf{w}_{j}^p \right|^2 + \sigma^2$. The above expressions (\ref{eq:S3}) and (\ref{eq:S6}) assume that each user decodes its private message last, which is adopted for its capability to reduce the interference through common message decoding, as in the classical multi-cell systems \cite{5910112}.} The total achievable rate of user $k$, $R_k = R_{k}^p + R_{k}^c$, then satisfies the following achievability conditions:
\begin{align}
\Gamma_{k}^p &\geq 2^{R_{k}^p/B} - 1, \quad \forall k \in \mathcal{K}, \label{eq:S7}\\
\Gamma_{i,k}^c &\geq 2^{R_{k}^c/B} - 1, \quad \forall i \in \mathcal{M}_k \hspace{1.5mm} \text{and} \hspace{1.5mm} \forall k \in \mathcal{K}. \label{eq:S8}
\end{align}
\subsection{Determining the Common Message Sets}
The latest results of TIN in interference networks, e.g., \cite{6875098}, suggest a scheduling procedure to manage interfering links in a device-to-device (D2D) network. The idea in \cite{6875098} is to allow the links which meet the TIN optimality criteria to share the same resources block (bandwidth, transmit frequency). Optimality of TIN criteria is then illustrated in terms of generalized degrees-of-freedom. In short, if a link causes much interference to other links (already scheduled to a transmitting resource block), or suffers from much interference, then one should schedule it to another block.\\
In the context of our paper, instead of scheduling users to other transmitting blocks, we propose to deploy RS-CMD strategy for the users which cause high levels of interference to other users, so as to determine a heuristic, yet reasonable, strategy for determining the common message sets. To this end, we propose a simple criterion to identify the users which receive too much interference (weak users), and allow them to decode the common messages of strong interferers (strong users). The network we are interested in is more complex than those studied in \cite{6544312, 7486985,6875098 }. The proposed criterion, although being a heuristic one, leads to a significant gain over the TIN strategy used in the state-of-the art C-RAN, as illustrated later in the simulations section.\\
Our proposed algorithm relies on first identifying the users for which TIN is not optimal, i.e., solely based on their channel gains. We do so by initializing the beamformers of all users as feasible maximum ratio combining (MRC) beamformers. Then we compute the achievable rates, and for each user, we evaluate the total interference received from other users. {To best identify whether a user is considered as a weak or a strong interferer, we define a parameter $\mu$ as a separating threshold. More specifically, if the rate of a user $k$ is within the $\mu$th percentile, the user is considered a weak user, and up to ${D}$ strongest interferes of user $k$ are added to the set $\Phi_k$. {\color{black} Here, $D$ represents the number of layers in successive decoding strategy}. We note that $\mu$ plays an important role in bridging the gap between RS with RS-CMD. In other terms, when $\mu$ is small, only the weakest users would decode the common message of their interferers. {\color{black}By increasing $\mu$, however, more users participate in decoding the common messages of their interferers. The value of $\mu$ plays an important role in determining the gain of RS-CMD over TIN as the simulations results later suggest}}.\\
The above strategy guarantees that user $k$ would mitigate the interference it receives by decoding the common message of the strongest interferer.
The intuition behind this is that, if the rate of a user $k$ is high relative to other weakest users, this user would not be receiving a high level of interference, which makes it less useful that user $k$ would decode the common message of other users. The steps of determining the set of common messages for all users $k \in \mathcal{K}$ are summarized in Algorithm 1 description below.
\begin{algorithm}[h]
\caption{Procedure to Identify $\left\lbrace \Phi_k \right\rbrace_{k =1}^K $}
\begin{algorithmic}[1]
\State \textbf{Input}: CSI matrix $\mathbf{{H}}$, set of active users $\mathcal{K}$ and initialize $\left\lbrace \Phi_k = \left\lbrace k\right\rbrace  \right\rbrace_{k =1}^K $.
\State Compute the beamformers as $\mathbf{{W}} = \mathbf{{H}}^H$. 
\State Compute the achievable rates using TIN, based on step 2,.
\For{$k \in \mathcal{K}$}
\State $\widehat{\mathcal{K}} \leftarrow \mathcal{K}\setminus \{k\}$
\State Compute the interference power $\left\lbrace \text{\sf I}_{k, i}\right\rbrace_{i \in \widehat{\mathcal{K}}} $ as observed at user $k$.  \\
\If{$R_k$ is within the $\mu$-th percentile of other users rate}
\State $\Phi_k = \Phi_k\cup\left\lbrace \underset{i \in \widehat{\mathcal{K}}}{\text{argmax}}\hspace{1mm} \text{\sf I}_{k, i} \right\rbrace $
\State $\widehat{\mathcal{K}} \leftarrow \widehat{\mathcal{K}} \setminus \left\lbrace\underset{i \in \widehat{\mathcal{K}}}{\text{argmax}}\hspace{1mm} \text{\sf I}_{k, i} \right\rbrace$
\If{$\left|\Phi_{k}\right| > L$}
\State $\mathcal{K} \leftarrow \mathcal{K}\setminus \left\lbrace k \right\rbrace $
\EndIf
\EndIf
\EndFor
\end{algorithmic}
\label{Alg1}
\end{algorithm}

\subsection{Problem Formulation}
{The optimization problem considered in this paper focuses on maximizing the weighted sum-rate (WSR) in RS-CMD C-RAN. The goal is to determine the common and private beamformers jointly with the common and private clusters of BSs associated with each user, subject to per BS transmission power and backhaul constraints. The considered WSR maximization problem can be mathematically written as:
	\begin{subequations}\label{eq:Opt1}
		\begin{align}
		&\underset{\left\lbrace \mathbf{w}_{k}^p, \mathbf{w}_{k}^c| \forall k \in \mathcal{K}\right\rbrace}{\text{maximize}}\quad \sum_{k = 1}^{K} \alpha_k\left(R_{k}^p + R_{k}^c\right)  \label{eq:Obj} \\
		&\text{subject to}\quad \eqref{eq:S}, \eqref{eq:S11}\\
		&\Gamma_{k}^p \geq 2^{R_{k}^p/B} - 1 \quad \forall k \in \mathcal{K} \label{eq:O1}\\
		&\Gamma_{i,k}^c \geq 2^{R_{k}^c/B} - 1 \quad \forall i \in \mathcal{M}_k \hspace{1.5mm} \text{and} \hspace{1.5mm} \forall k \in \mathcal{K} \label{eq:O2}
		\end{align}
	\end{subequations}
	where the coefficient $ \alpha_k$ refers to the priority weight associated with user $k$.
	Problem \eqref{eq:Opt1} is a mixed integer non linear problem, which is generally an NP-hard problem, due to its mixed discrete and continuous optimization nature, and the non-convexity of the underlying objective and constraints as a function of the beamforming vectors.
	To tackle this challenging problem, we propose an iterative algorithm based on a strongly inner-convex approximation framework coupled with a smooth approximation of the non-smooth, non-convex $l_0$-norm. Before we proceed to the technical details of our approach, we elaborate on the structure of problem \eqref{eq:Opt1}. The problem is non-convex even if we relax the binary constraints in \eqref{eq:S11}, e.g., by using $l_1$ relaxation to the $l_0$-norm. This is due to non-convexity of the objective \eqref{eq:Obj} as a function of the beamforming vectors. Moreover, the achievability constraints and the backhaul constraints in \eqref{eq:O1}-\eqref{eq:O2} and \eqref{eq:S11} are non-convex functions, and define a non-convex feasible set. To overcome this difficulty, we approximate each non-convex function with a surrogate upper-bound convex function, which helps approximating the non-convex feasible set with a  convex one. Then, we iteratively refine this approximation till convergence. The following section describes all the technicalities of the above steps in details.
	\section{Proposed Solution}
	{In this section, we present our proposed framework to tackle problem \eqref{eq:Opt1}. We start by relaxing the discrete variables, and then we proceed by introducing an inner convex approximation (ICA) reformulation of the non-convex clustering problem. After determining the clusters, we determine the optimal beamforming vectors and the RS mode to transmit private and common messages, respectively, which also quantifies how much rate is assigned to the private and common messages, respectively.}
	\subsection{Relaxing the $l_0$-norm}
	We use a smooth concave function to approximate the non-smooth, non-convex (in fact integer) $l_0$-norm. Consider the function $f_{\theta}\left(x\right)$ defined as:
	\begin{equation}
	f_{\theta}\left(x\right) = \frac{2}{\pi}\arctan\left(\frac{x}{\theta}\right), \quad x\geq 0
	\end{equation}
	which is often used in the literature to approximate the $l_0$-norm \cite{7488289}, \cite{8254789}.
	Here, $\theta$ is a smoothness parameter which controls the quality of the $l_0$-norm approximation. 
	After relaxing the discrete $l_0$-norm, {we reformulate the problem \eqref{eq:Opt1} by introducing} SINR variables instead of using the rate expressions. Let $R_{k}^p = B\log_2\left( 1+\gamma_{k}^p\right)$ and $R_{k}^c = B\log_2\left( 1+\gamma_{k}^c\right)$ { for some $\gamma_{k}^p, \gamma_{k}^c > 0$}. Now, we can rewrite \eqref{eq:Opt1} as:
	\begin{subequations}\label{eq:Opt2}
		\begin{align}
		&\underset{\left\lbrace \mathbf{w}_{k}^p, \mathbf{w}_{k}^c, \bm{\gamma}_{k}| \forall k \in \mathcal{K}\right\rbrace}{\text{maximize}}\quad \sum_{k = 1}^{K} \alpha_k B \left(\log_2\left( 1+\gamma_{k}^p\right) + \log_2\left( 1+\gamma_{k}^c\right)\right) \nonumber \\
		&\text{subject to}\quad \eqref{eq:S}\\
		&\Gamma_{k}^p \geq \gamma_{k}^p \quad \forall k \in \mathcal{K} \label{eq:O3}\\
		&\Gamma_{i,k}^c \geq \gamma_{k}^c \quad \forall i \in \mathcal{M}_k \hspace{1.5mm} \text{and} \hspace{1.5mm} \forall k \in \mathcal{K} \label{eq:O4}\\
		&\sum\limits_{k \in \mathcal{K}} B \Bigl(f_{\theta}\left(\normV[\big]{\mathbf{w}_{n, k}^p}_{2}^{2}\right)\log_2\left( 1+\gamma_{k}^p\right)  +  \nonumber \\&
		f_{\theta}\left(\normV[\big]{\mathbf{w}_{n, k}^c}_{2}^{2}\right)\log_2\left( 1+\gamma_{k}^c\right)\Bigr) \leq C_n \quad \forall n\in \mathcal{N} \label{eq:O5}.
		\end{align}
	\end{subequations}
	Problem \eqref{eq:Opt2} is still non-convex despite relaxing the binary constraints. This is because the feasible set defined by constraints
	\eqref{eq:O3}-\eqref{eq:O5} is a non-convex set. To overcome this challenge, we use some algebraic manipulations to rewrite the problem \eqref{eq:Opt2} in a form that is easier to tackle, as described next in the text.
	\subsection{{ Clustering}}
	Given the SINR expressions in \eqref{eq:S3} and \eqref{eq:S6}, we can equivalently write the constraints \eqref{eq:O3} and \eqref{eq:O4} as:
	\begin{align}
	\label{eq:16}
	&\sum\limits_{j \in \mathcal{K}\setminus k}\left|\mathbf{h}_{k}^H\mathbf{w}_{j}^p \right|^2 + \sum\limits_{l \in \mathcal{K} \setminus \Phi_{k}}\left|\mathbf{h}_{k}^H\mathbf{w}_{l}^c \right|^2 + \sigma^2  - \frac{\left|\mathbf{h}_{k}^H\mathbf{w}_{k}^p \right|^2}{\gamma_{k}^p} \leq 0 \\
	\label{eq:17}
	&T_k + \sum\limits_{l \in \mathcal{K} \setminus \Phi_{k}}\left|\mathbf{h}_{k}^H\mathbf{w}_{l}^c \right|^2 + \sum\limits_{\substack{m  \in \Phi_{k}\\ \pi_{k}(m)> \pi_{k}(i)}}\left|\mathbf{h}_{k}^H\mathbf{w}_{m}^c \right|^2  
	-\frac{\left|\mathbf{h}_{i}^H\mathbf{w}_{k}^c \right|^2}{\gamma_{k}^c} \leq 0
	\end{align}
	Note that the function $\frac{\left|\mathbf{h}_{k}^H\mathbf{w}_{k}^p \right|^2}{\gamma_{k}^p}$ in \eqref{eq:16} is of the form $\frac{\normV[]{x}_{2}^{2}}{\beta}$, which is a convex quadratic function \cite{8032503, Book2, 8254789}. {This reformulation  is useful, because it converts the constraints \eqref{eq:O3} and \eqref{eq:O4} to a difference of convex functions}, which facilitates the inner-convex approximation. Let $\mathbf{t}_{k} = \big[t_{1,k}^p, t_{1,k}^c,\ldots t_{N,k}^p, t_{N,k}^c\big]^T$ and $\mathbf{d}_{k} = \left[d_{k}^p, d_{k}^c\right]$ be slack variables. With the help of such new variables $\mathbf{t}_{k}$ and $\mathbf{d}_{k}$, we can rewrite the optimization problem \eqref{eq:Opt2} by splitting the constraint in \eqref{eq:O5} into five simpler constraints as follows:
	\begin{subequations}\label{eq:Opt5}
		\begin{align}
		&\underset{\left\lbrace \mathbf{w}_{k}^p, \mathbf{w}_{k}^c, \bm{\gamma}_{k}, \mathbf{d}_{k}, \mathbf{t}_{k}| \forall k \in \mathcal{K}\right\rbrace}{\text{maximize}}\quad \sum_{k = 1}^{K} g_1\left(\gamma_{k}^p, \gamma_{k}^c\right)\\
		&\text{subject to}\quad \eqref{eq:S}, \eqref{eq:16}-\eqref{eq:17} \label{eq:31}\\
		& \sum\limits_{k \in \mathcal{K}} \left( t_{n,k}^p d_{k}^p + t_{n,k}^c d_{k}^c\right) \leq C_n/B \quad \forall n\in \mathcal{N} \label{eq:n1}\\
		&f_{\theta}\left(\normV[\big]{\mathbf{w}_{n, k}^p}_{2}^{2}\right) \leq t_{n,k}^p \hspace{1mm} \text{and} \hspace{1mm} f_{\theta}\left( \norm{\mathbf{w}_{n, k}^c}_{2}^{2}\right) \leq t_{n,k}^c  \label{eq:n4}\\
		& \log_2\left( 1+\gamma_{k}^p\right) \leq  d_{k}^p \label{eq:n2} \\
		&\log_2\left( 1+\gamma_{k}^c\right) \leq  d_{k}^c\quad \forall n\in \mathcal{N} \hspace{1.5mm} \text{and} \hspace{1.5mm} \forall k \in \mathcal{K}  \label{eq:n3}
		\end{align}
	\end{subequations}
	where the function $g_1\left(\gamma_{k}^p, \gamma_{k}^c\right)$ is defined as: $g_1\left(\gamma_{k}^p, \gamma_{k}^c\right) =  \alpha_k B \left(\log_2\left( 1+\gamma_{k}^p\right) + \log_2\left( 1+\gamma_{k}^c\right)\right)$.
	The following proposition illustrates how problems \eqref{eq:Opt2} and \eqref{eq:Opt5} are indeed equivalent to each other. Let $\mathbf{t}, \mathbf{d}$ be slack variables defined as: $\mathbf{t} \triangleq \left[\mathbf{t}_1^T,\ldots,\mathbf{t}_K^T \right]^T$ and $\mathbf{d} \triangleq \left[\mathbf{d}_1^T,\ldots,\mathbf{d}_K^T\right]^T$
	\begin{proposition}
		$\left(\mathbf{w}^*, \boldsymbol{\gamma}^*\right)$ is a stationary solution of \eqref{eq:Opt2} if and only if there exist $\left(\mathbf{t}^*, \mathbf{d}^*\right)$ such that  $\left(\mathbf{w}^*, \boldsymbol{\gamma}^*, \mathbf{t}^*, \mathbf{d}^*\right)$ is a stationary solution of \eqref{eq:Opt5}.
	\end{proposition}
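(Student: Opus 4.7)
The plan is to prove the equivalence by examining the KKT conditions of both problems and exploiting the monotone structure of how the slack variables appear. The key observation is that in problem \eqref{eq:Opt5}, the slacks $t_{n,k}^p, t_{n,k}^c, d_k^p, d_k^c$ appear \emph{only} in the backhaul constraint \eqref{eq:n1} and in their defining inequalities \eqref{eq:n4}--\eqref{eq:n3}; they do not enter the objective, the power constraint \eqref{eq:S}, or the SINR constraints \eqref{eq:16}--\eqref{eq:17}. Moreover, $t_{n,k}^o, d_k^o$ appear with nonnegative coefficients in \eqref{eq:n1} (since $d_k^o = \log_2(1+\gamma_k^o)\geq 0$ and $t_{n,k}^o\geq f_\theta(\cdot)\geq 0$), so decreasing any slack relaxes \eqref{eq:n1} while leaving all other constraints and the objective unchanged.

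From this monotonicity, I would first argue that at any stationary point of \eqref{eq:Opt5}, the constraints \eqref{eq:n4}--\eqref{eq:n3} must be active, i.e., $t_{n,k}^{o*} = f_\theta(\|\mathbf{w}_{n,k}^{o*}\|_2^2)$ and $d_k^{o*} = \log_2(1+\gamma_k^{o*})$ for $o\in\{p,c\}$. Substituting these tight values into \eqref{eq:n1} recovers exactly the composite backhaul constraint \eqref{eq:O5} of problem \eqref{eq:Opt2}. Since the objective, \eqref{eq:S}, and \eqref{eq:16}--\eqref{eq:17} are shared between both formulations, this already establishes that the reduced tuple $(\mathbf{w}^*,\boldsymbol{\gamma}^*)$ is feasible for \eqref{eq:Opt2} and vice versa.

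Next, I would write out the Lagrangians of both problems and compare the KKT systems. For the forward direction, given a KKT point of \eqref{eq:Opt5} with multipliers $(\lambda_n,\mu_{n,k}^o,\nu_k^o,\ldots)$ for constraints \eqref{eq:n1}, \eqref{eq:n4}, \eqref{eq:n2}--\eqref{eq:n3}, stationarity with respect to $t_{n,k}^o$ gives $\mu_{n,k}^o = \lambda_n d_k^{o*}$, and stationarity with respect to $d_k^o$ gives $\nu_k^o = \lambda_n$ summed appropriately. Plugging these identifications back into the stationarity conditions with respect to $\mathbf{w}_{n,k}^o$ and $\gamma_k^o$, and using the chain rule on the composite constraint \eqref{eq:O5}, recovers exactly the stationarity conditions of \eqref{eq:Opt2} with the single multiplier $\lambda_n$ attached to \eqref{eq:O5}. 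The reverse direction is obtained by defining the slacks at their tight values as above and distributing the multiplier $\lambda_n$ of \eqref{eq:O5} into multipliers for \eqref{eq:n1}, \eqref{eq:n4}, \eqref{eq:n2}--\eqref{eq:n3} according to the same identifications, after which all KKT conditions of \eqref{eq:Opt5} are satisfied by construction.

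The main subtlety I expect is bookkeeping the Lagrange multipliers through the bilinear term $t_{n,k}^o d_k^o$ in \eqref{eq:n1}, since its partial derivatives couple the two families of slack multipliers. However, because \eqref{eq:n4}--\eqref{eq:n3} are active at any stationary point, this bilinear term evaluates precisely to $f_\theta(\|\mathbf{w}_{n,k}^o\|_2^2)\log_2(1+\gamma_k^o)$, and differentiating either the bilinear or the composite form yields the same gradient. A brief check of a standard constraint qualification (e.g., Mangasarian--Fromovitz, which holds generically because \eqref{eq:S} admits strictly interior points and the slack constraints are separable) ensures KKT conditions characterize stationarity in both problems, completing the equivalence.
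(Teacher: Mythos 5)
Your proof is correct in substance, but it takes a genuinely more detailed route than the paper. The paper's own proof is a one-paragraph assertion: the objectives coincide, \eqref{eq:S} and the manipulated SINR constraints \eqref{eq:16}--\eqref{eq:17} are shared, and \eqref{eq:O5} is simply declared ``equivalent'' to \eqref{eq:n1}--\eqref{eq:n3} after introducing the slacks; no stationarity analysis is performed. What you add --- the explicit KKT bookkeeping identifying $\mu_{n,k}^o=\lambda_n d_k^{o*}$ and $\nu_k^o=\sum_n\lambda_n t_{n,k}^{o*}$, then substituting back to recover the stationarity conditions of \eqref{eq:Opt2} with the single multiplier $\lambda_n$ attached to the composite constraint \eqref{eq:O5} --- is exactly the content needed to upgrade ``the feasible sets agree after projecting out $(\mathbf{t},\mathbf{d})$'' to ``the stationary points agree,'' which the paper glosses over. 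Your version is the more rigorous one; the paper's buys brevity at the cost of leaving the stationarity claim essentially unargued.

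One step of yours is overstated: it is not true that \eqref{eq:n4}--\eqref{eq:n3} must be active at \emph{every} stationary point of \eqref{eq:Opt5}. The slacks do not enter the objective, so if the backhaul constraint \eqref{eq:n1} is inactive (hence $\lambda_n=0$), any feasible non-tight slack values still give a stationary point --- loosening a slack creates a feasible direction but not an ascent direction, so first-order conditions cannot rule it out. This does not break your argument: in that case your own identifications force $\mu_{n,k}^o=\nu_k^o=0$, and the monotonicity you invoke (larger slacks make \eqref{eq:n1} only more restrictive) still transfers feasibility of \eqref{eq:O5}, so the forward direction goes through with a zero multiplier on \eqref{eq:O5}; the reverse direction only requires the \emph{existence} of some $(\mathbf{t}^*,\mathbf{d}^*)$, for which the tight choice works. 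You should restate the claim as: the slack constraints may be taken tight without loss of generality, and are forced tight whenever $\lambda_n>0$ and the corresponding $t_{n,k}^{o*},d_k^{o*}$ are positive (consistent with the positivity assumption $\left(\boldsymbol{\gamma}^*,\mathbf{t}^*,\mathbf{d}^*\right)>0$ that the paper imposes later in Theorem 1).
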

	\begin{proof}
		The respective formulations of problems \eqref{eq:Opt2} and \eqref{eq:Opt5} share the same objective function. Moreover, the maximum transmit power constraint \eqref{eq:S} is the same in both problems. Constraints in \eqref{eq:16}-\eqref{eq:17} are equivalent mathematical manipulations of constraints \eqref{eq:O3}-\eqref{eq:O4}. Furthermore, constraint \eqref{eq:O5} is equivalent to constraints \eqref{eq:n1}--\eqref{eq:n3}, after introducing the slack variables $\mathbf{t}, \mathbf{d}$. {Therefore, optimization problems \eqref{eq:Opt2} and \eqref{eq:Opt5} are equivalent to each other.}
	\end{proof}
	{ Solving problem \eqref{eq:Opt5} helps finding the clusters which serve the private and common messages respectively for each user.}
	{But since \eqref{eq:Opt5} is a non-convex problem, we propose using ICA, so as to approximate the non-convex feasible set of problem \eqref{eq:Opt5} as described next.}
	\subsection{Inner Convex Approximations (ICA)}
	Although problem \eqref{eq:Opt5} is a non-convex problem, this paper adopts well-chosen ICA techniques to convexify its feasibility set, which is defined by constraints in \eqref{eq:31}--\eqref{eq:n3}. We start with some algebraic transformations to constraint \eqref{eq:n1}. We note that the bilinear function $ t_{n,k}^p d_{k}^p + t_{n,k}^c d_{k}^c$ can be equivalently written as
	\begin{align}\label{eq:con}
	t_{n,k}^p d_{k}^p + t_{n,k}^c d_{k}^c &= \frac{1}{2}\sum_{o \in \left\lbrace p, c \right\rbrace } \bigl[ \left(t_{n,k}^o +  d_{k}^o\right)^2 - \left(t_{n,k}^o\right)^2  - \left( d_{k}^o\right)^2 \big]
	\end{align}
	This form is equivalent to a convex plus concave functions (difference of two convex functions). We proceed by introducing a convex upper bound to the bilinear function in \eqref{eq:con}, by keeping the convex part and replacing the concave function with its first-order approximation.\\
	Let $\tilde{g}_{2n}(\mathbf{t}, \mathbf{d}, \mathbf{\tilde{t}}, \mathbf{\tilde{d}})$ be defined as:
	\begin{align}\label{eq:g2}
	\tilde{g}_{2n}(\mathbf{t}, \mathbf{d}, \mathbf{\tilde{t}}, \mathbf{\tilde{d}}) &\triangleq \sum_{k \in \mathcal{K} }\sum_{o \in \left\lbrace p, c \right\rbrace }\Bigl(\frac{1}{2}\left(t_{n,k}^o +  d_{k}^o\right)^2 - \frac{1}{2}\big(\tilde{t}_{n,k}^o\big)^2 - \frac{1}{2}\big(\tilde{d}_{k}^o\big)^2 \nonumber\\ &  - \tilde{t}_{n,k}^o \left(t_{n,k}^o - \tilde{t}_{n,k}^o\right) -\tilde{d}_{k}^o\big({d}_{k}^o-  \tilde{d}_{k}^o\big)\Bigr) - C_n/B \quad \forall n \in \mathcal{N}
	\end{align}	
	{where $(\mathbf{\tilde{t}}, \mathbf{\tilde{d}})$ are feasible fixed values, which satisfy constraints \eqref{eq:n1}--\eqref{eq:n3}.}
	\begin{proposition}
		For any feasible vectors $(\mathbf{\tilde{t}}, \mathbf{\tilde{d}})$, the function $\tilde{g}_{2n}(\mathbf{t}, \mathbf{d}, \mathbf{\tilde{t}}, \mathbf{\tilde{d}})$ satisfies:
		\begin{equation}
		\tilde{g}_{2n}(\mathbf{t}, \mathbf{d}, \mathbf{\tilde{t}}, \mathbf{\tilde{d}}) \geq \underbrace{\sum\limits_{k \in \mathcal{K}} \left( t_{n,k}^p d_{k}^p + t_{n,k}^c d_{k}^c\right) - C_n/B}_{{g}_{2n}(\mathbf{t}, \mathbf{d})}
		\end{equation}
		{for all feasible values $(\mathbf{\tilde{t}}, \mathbf{\tilde{d}})$ and all $n \in \mathcal{N}$.}
	\end{proposition}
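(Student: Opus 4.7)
The plan is to recognize that the surrogate $\tilde{g}_{2n}$ defined in~\eqref{eq:g2} is obtained by linearizing the concave part of the difference-of-convex rewriting~\eqref{eq:con} of the bilinear terms $t_{n,k}^o d_{k}^o$. Hence the proposition reduces to the textbook fact that a convex quadratic lies above any of its tangent hyperplanes. The main ingredients are therefore identity~\eqref{eq:con} together with the first-order characterization of convexity.

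First, I would fix $n \in \mathcal{N}$ and work termwise in $(k, o)$ with $k \in \mathcal{K}$ and $o \in \{p, c\}$. By~\eqref{eq:con}, the bilinear expression inside $g_{2n}$ decomposes as
\[
t_{n,k}^o d_{k}^o = \tfrac{1}{2}\bigl(t_{n,k}^o + d_{k}^o\bigr)^2 - \tfrac{1}{2}\bigl(t_{n,k}^o\bigr)^2 - \tfrac{1}{2}\bigl(d_{k}^o\bigr)^2,
\]
where the first summand is convex and is kept unchanged in $\tilde{g}_{2n}$, while the remaining two summands form a concave function of $(t_{n,k}^o, d_{k}^o)$. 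Next, since the map $(x, y) \mapsto \tfrac{1}{2}x^2 + \tfrac{1}{2}y^2$ is convex, it lies above its first-order Taylor expansion at any point $(\tilde{t}_{n,k}^o, \tilde{d}_{k}^o)$; multiplying through by $-1$ gives the global upper bound
\[
-\tfrac{1}{2}(t_{n,k}^o)^2 - \tfrac{1}{2}(d_{k}^o)^2 \le -\tfrac{1}{2}(\tilde{t}_{n,k}^o)^2 - \tfrac{1}{2}(\tilde{d}_{k}^o)^2 - \tilde{t}_{n,k}^o\bigl(t_{n,k}^o - \tilde{t}_{n,k}^o\bigr) - \tilde{d}_{k}^o\bigl(d_{k}^o - \tilde{d}_{k}^o\bigr),
\]
with equality at $(t_{n,k}^o, d_{k}^o) = (\tilde{t}_{n,k}^o, \tilde{d}_{k}^o)$. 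Adding the untouched convex term $\tfrac{1}{2}(t_{n,k}^o + d_{k}^o)^2$ to both sides reproduces exactly the summand that appears inside $\tilde{g}_{2n}$ in~\eqref{eq:g2}.

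Finally, I would sum this termwise inequality over $k \in \mathcal{K}$ and $o \in \{p, c\}$ and subtract $C_n/B$ from both sides to recover the definitions of $g_{2n}$ and $\tilde{g}_{2n}$, which closes the argument since $n \in \mathcal{N}$ was arbitrary. I do not anticipate any substantive obstacle here: this is the canonical majorize-then-solve construction for bilinear coupling, and the only thing that requires care is the sign and index bookkeeping across the double sum. It is worth remarking that the bound is tight at $(\mathbf{\tilde{t}}, \mathbf{\tilde{d}}) = (\mathbf{t}, \mathbf{d})$ and that the gradients of $\tilde{g}_{2n}$ and $g_{2n}$ coincide at that point; although these two properties are not part of the statement, they are precisely what will later guarantee monotone progress and stationarity of the outer ICA iterations that use $\tilde{g}_{2n}$ as the surrogate for the non-convex backhaul constraint.
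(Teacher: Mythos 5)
Your proposal is correct and follows essentially the same route as the paper's proof: both use the difference-of-convex identity \eqref{eq:con} for the bilinear term and then bound the concave part from above by its first-order Taylor expansion at $(\mathbf{\tilde{t}}, \mathbf{\tilde{d}})$, invoking the first-order characterization of convexity. Your termwise bookkeeping and the closing remark on tightness and gradient matching are consistent with (and slightly more explicit than) the paper's argument.
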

	\begin{proof}
		We note that the function
		\begin{align}\label{eq:conP}
		g_{n, k}(\mathbf{{y}}) \triangleq \frac{1}{2}\sum_{o \in \left\lbrace p, c \right\rbrace } \bigl[ \underbrace{\left(t_{n,k}^o +  d_{k}^o\right)^2}_{g_{n, k, o}^{+}(\mathbf{y})} - \big( \underbrace{\left(t_{n,k}^o\right)^2 +\left( d_{k}^o\right)^2}_{g_{n, k, o}^{-}(\mathbf{y})}\big)  \big]
		\end{align}
		has a structure of difference of two convex functions, where both functions $g_{n, k, o}^{+}\left(\mathbf{y} \right) $ and $g_{n, k, o}^{-}\left(\mathbf{y} \right) $ are convex, and $\mathbf{y} = \left[ \mathbf{{t}}^T, \mathbf{{d}}^T\right]^T$. By keeping the convex part $g_{n, k, o}^{+}\left(\cdot \right)$ unchanged and linearising the concave part $-g_{n, k, o}^{-}\left(\cdot \right)$ using the first order approximation around the point $\big( \tilde{t}_{n,k}^o, \tilde{d}_{k}^o\big) \hspace{1mm} \forall o \in \left\lbrace p, c \right\rbrace $, we get the following convex upper approximation of  the function $g_{n, k}(\mathbf{{y}})$:
		\begin{align}
		\tilde{g}_{n, k}(\mathbf{{y}},\mathbf{\tilde{y}}) \triangleq & \frac{1}{2}\sum_{o \in \left\lbrace p, c \right\rbrace } g_{n, k, o}^{+}(\mathbf{{y}}) - g_{n, k, o}^{-}(\mathbf{\tilde{y}})  
		-\nabla_{\mathbf{{y}}}g_{n, k, o}^{-}(\mathbf{\tilde{y}})^T\left(\mathbf{{y}} - \mathbf{\tilde{y}} \right),
		\end{align}
		where  $\mathbf{\tilde{y}} = \big[ \mathbf{\tilde{t}}^T,  \mathbf{\tilde{d}}^T\big]^T$. \\
		We can write the function $\tilde{g}_{n, k}(\mathbf{{y}},\mathbf{\tilde{y}})$ as
		\begin{align}
		\tilde{g}_{n, k}(\mathbf{{y}},\mathbf{\tilde{y}}) &\eqdef \sum_{o \in \left\lbrace p, c \right\rbrace }\Bigl(\frac{1}{2}\left(t_{n,k}^o +  d_{k}^o\right)^2 - \frac{1}{2}\big(\tilde{t}_{n,k}^o\big)^2 - \frac{1}{2}\big(\tilde{d}_{k}^o\big)^2 \nonumber\\ &
		-\tilde{t}_{n,k}^o \left(t_{n,k}^o - \tilde{t}_{n,k}^o\right) -\tilde{d}_{k}^o\big({d}_{k}^o- \tilde{d}_{k}^o\big)\Bigr)
		\end{align}
		Based on the convexity of $g_{n, k}^{-}(\mathbf{y})$, the following inequality follows:
		$t_{n,k}^p d_{k}^p + t_{n,k}^c d_{k}^c = g_{n, k}(\mathbf{{y}}) \leq \tilde{g}_{n, k}(\mathbf{{y}},\mathbf{\tilde{y}})$.
		This completes the proof of proposition 2.
	\end{proof}
	Afterwards, we perform distinct ICA operations for the remaining constraints. More precisely, for constraint \eqref{eq:n4}, we linearize the concave functions $f_{\theta}\left(\normV[\big]{\mathbf{w}_{n, k}^p}_{2}^{2}\right)$ and $f_{\theta}\left(\normV[\big]{\mathbf{w}_{n, k}^c}_{2}^{2}\right)$ around $\tilde{\mathbf{w}}_{n, k}^p$ and $\tilde{\mathbf{w}}_{n, k}^c$, respectively. { This leads to} the following inner-convex approximation of the set defined by the constraints in \eqref{eq:n4}:
	\begin{align}
	\tilde{g}_3\left(\mathbf{w}_{n, k}^p, \tilde{\mathbf{w}}_{n, k}^p\right) &\triangleq f_{\theta}\left( \normV[\big]{\mathbf{w}_{n, k}^p}_{2}^{2}\right) 
	+ \nabla f_{\theta}\left( \normV[\big]{\mathbf{w}_{n, k}^p}_{2}^{2}\right)\left(\normV[\big]{\mathbf{w}_{n, k}^p}_{2}^{2} -  \normV[\big]{\tilde{\mathbf{w}}_{n, k}^p}_{2}^{2}\right) \label{eq:g3}	\\
	\tilde{g}_4\left(\mathbf{w}_{n, k}^c, \tilde{\mathbf{w}}_{n, k}^c\right) &\triangleq f_{\theta}\left( \normV[\big]{\mathbf{w}_{n, k}^c}_{2}^{2}\right)
	+ \nabla f_{\theta}\left( \normV[\big]{\mathbf{w}_{n, k}^c}_{2}^{2}\right)\left(\normV[\big]{\mathbf{w}_{n, k}^c}_{2}^{2} -  \normV[\big]{\tilde{\mathbf{w}}_{n, k}^c}_{2}^{2}\right)\label{eq:g4}
	\end{align}
	We follow the same procedure with constraints \eqref{eq:n2} and \eqref{eq:n3}, where we linearize the concave functions $\log_2(1+\gamma_{k}^p)$, $\log_2(1+\gamma_{k}^c)$ around $\tilde{\gamma}_{k}^p$ and $\tilde{\gamma}_{k}^c$, respectively. We obtain the following equations which define an inner-convex approximation of the non-convex feasible set defined by constraints \eqref{eq:n2} and \eqref{eq:n3}:
	\begin{equation}\label{eq:30}
	\tilde{g}_5\left(\gamma_{k}^p, \tilde{\gamma}_{k}^p\right) \triangleq \log_2(1+\tilde{\gamma}_{k}^p) + \frac{1}{\left( 1+\tilde{\gamma}_{k}^p\right) \text{ln}(2)}\left(\gamma_{k}^p- \tilde{\gamma}_{k}^p\right) \leq 0
	\end{equation}
	\begin{equation}\label{eq:32}
	\tilde{g}_6\left(\gamma_{k}^c, \tilde{\gamma}_{k}^c\right) \triangleq \log_2(1+\tilde{\gamma}_{k}^c) + \frac{1}{\left( 1+\tilde{\gamma}_{k}^c\right) \text{ln}(2)}\left(\gamma_{k}^c- \tilde{\gamma}_{k}^c\right) \leq 0
	\end{equation}
	Concerning the SINR constraints in \eqref{eq:16} and \eqref{eq:17}, we note that if $\left(  \tilde{\mathbf{w}}, \boldsymbol{\tilde{\gamma}}\right) $ is a feasible point of \eqref{eq:Opt5}, then the following holds:
	\begin{equation}\label{eq:18}
	\frac{\left|\mathbf{h}_{k}^H\mathbf{w}_{k}^p \right|^2}{\gamma_{k}^p} \geq \frac{2\Re\left\lbrace\big( \tilde{\mathbf{w}}_{k}^p \big) ^H\mathbf{h}_{k}\mathbf{h}_{k}^H \mathbf{w}_{k}^p \right\rbrace }{\tilde{\gamma}_k^p}-\frac{\left|\mathbf{h}_{k}^H \tilde{\mathbf{w}}_{k}^p \right|^2}{\big(\tilde{\gamma}_{k}^p\big)^2} \gamma_{k}^p
	\end{equation}
	and
	\begin{equation}\label{eq:19}
	\frac{\left|\mathbf{h}_{i}^H\mathbf{w}_{k}^c \right|^2}{\gamma_{k}^c} \geq \frac{2\Re\left\lbrace\big( \tilde{\mathbf{w}}_{k}^c \big) ^H\mathbf{h}_{i}\mathbf{h}_{i}^H \mathbf{w}_{k}^c \right\rbrace }{\tilde{\gamma}_k^c}-\frac{\left|\mathbf{h}_{i}^H \tilde{\mathbf{w}}_{k}^c \right|^2}{\big(\tilde{\gamma}_{k}^c\big)^2} \gamma_{k}^c
	\end{equation}
	where $\Re\left\lbrace \cdot \right\rbrace $ is the real part of a complex number.
	Based on inequalities \eqref{eq:18} and \eqref{eq:19}, we can establish inner-convex approximations of the constraints in \eqref{eq:16} and \eqref{eq:17} as follows:
	\begin{align}
	\tilde{g}_7\left(\mathbf{w}, \gamma_{k}^p; \tilde{\mathbf{w}}, \tilde{\gamma}_{k}^p\right) &  \triangleq
	\sum\limits_{j \in \mathcal{K}\setminus k}\left|\mathbf{h}_{k}^H\mathbf{w}_{j}^p \right|^2 + \sum\limits_{l \in \mathcal{K} \setminus \Phi_{k}}\left|\mathbf{h}_{k}^H\mathbf{w}_{l}^c \right|^2 + \sigma^2 \nonumber \\ &
	- \frac{2\Re\left\lbrace\big( \tilde{\mathbf{w}}_{k}^p \big) ^H\mathbf{h}_{k}\mathbf{h}_{k}^H \mathbf{w}_{k}^p \right\rbrace }{\tilde{\gamma}_k^p}+\frac{\left|\mathbf{h}_{k}^H \tilde{\mathbf{w}}_{k}^p \right|^2}{\big(\tilde{\gamma}_{k}^p\big)^2} \gamma_{k}^p \label{eq:20}\\
	\tilde{g}_8\left(\mathbf{w}, \gamma_{k}^c;\tilde{\mathbf{w}}, \tilde{\gamma}_{k}^c\right) & \triangleq
	T_k + \sum\limits_{l \in \mathcal{K} \setminus \Phi_{k}}\left|\mathbf{h}_{k}^H\mathbf{w}_{l}^c \right|^2 \nonumber \\  &
	+  \sum\limits_{\substack{m  \in \Phi_{k}\\ \pi_{k}(m)> \pi_{k}(i)}}\left|\mathbf{h}_{k}^H\mathbf{w}_{m}^c \right|^2 + \frac{\left|\mathbf{h}_{i}^H \tilde{\mathbf{w}}_{k}^c \right|^2}{\big(\tilde{\gamma}_{k}^c\big)^2} \gamma_{k}^c \nonumber \\& -\frac{2\Re\left\lbrace\big( \tilde{\mathbf{w}}_{k}^c \big) ^H\mathbf{h}_{i}\mathbf{h}_{i}^H \mathbf{w}_{k}^c \right\rbrace }{\tilde{\gamma}_k^c} \label{eq:g8}
	\end{align}
	The next subsection presents the strongly inner-convex approximations of problem \eqref{eq:Opt2}, and describes the algorithm that solves it.
	\subsection{Strongly ICA based Algorithm}
	The functions in \eqref{eq:g2}, \eqref{eq:g3}--\eqref{eq:32} and \eqref{eq:20}--\eqref{eq:g8} define a convex feasible set, which represents an inner-approximation of the non-convex feasible set of problem \eqref{eq:Opt5}. The idea of our approach is to iteratively solve the optimization problem defined with this approximation. After each iteration, we refine the ICA of the feasible set in \eqref{eq:Opt5}, and keep iterating until convergence to a stationary solution, as described next. The approximate optimization problem is defined as follows:
	\begin{subequations}\label{eq:Opt6}
		\begin{align}
		&\underset{\left\lbrace \mathbf{w}_{k}^p, \mathbf{w}_{k}^c, \bm{\gamma}_{k}, \mathbf{d}_{k}, \mathbf{t}_{k}| \forall k \in \mathcal{K}\right\rbrace}{\text{maximize}}\quad \sum_{k = 1}^{K} g_1\left(\gamma_{k}^p, \gamma_{k}^c\right) - g_9\left({\mathbf{w}}, \boldsymbol{{\gamma}}; \tilde{\mathbf{w}}, \boldsymbol{\tilde{\gamma}}\right)\\
		&\text{subject to}\quad \eqref{eq:S} \label{eq:n8}\\
		& \tilde{g}_{2n}(\mathbf{t}, \mathbf{d}, \mathbf{\tilde{t}}, \mathbf{\tilde{d}}) \leq 0 \label{eq:n5}\\
		& \tilde{g}_3\left(\mathbf{w}_{n, k}^p, \tilde{\mathbf{w}}_{n, k}^p\right) \leq 0 \label{eq:n6}\\
		& \tilde{g}_4\left(\mathbf{w}_{n, k}^c, \tilde{\mathbf{w}}_{n, k}^c\right) \leq 0 \label{eq:n7} \\
		& \tilde{g}_5\left(\gamma_{k}^p, \tilde{\gamma}_{k}^p\right)\leq 0 \\
		& \tilde{g}_6\left(\gamma_{k}^c, \tilde{\gamma}_{k}^c\right) \leq 0 \\
		& \tilde{g}_7\left(\mathbf{w}, \gamma_{k}^p; \tilde{\mathbf{w}}, \tilde{\gamma}_{k}^p\right)  \leq 0 \\
		& \tilde{g}_8\left(\mathbf{w}, \gamma_{k}^c;\tilde{\mathbf{w}}, \tilde{\gamma}_{k}^c\right) \leq 0 \label{eq:n9}
		\end{align}
	\end{subequations}
	Here, $g_9\left({\mathbf{w}}, \boldsymbol{{\gamma}}; \tilde{\mathbf{w}}, \boldsymbol{\tilde{\gamma}}\right)$ is a proximal term to assure that the objective is a strongly concave function, and is defined as follows:
	\begin{equation}\label{eq:21}
	g_9\left({\mathbf{w}}, \boldsymbol{{\gamma}}; \tilde{\mathbf{w}}, \boldsymbol{\tilde{\gamma}}\right) = \rho_1 \normV[\big]{\mathbf{w} - \tilde{\mathbf{w}}}_{2}^{2} + \rho_2 \normV[\big]{\boldsymbol{{\gamma}} - \boldsymbol{\tilde{\gamma}}}_{2}^{2}.
	\end{equation}
	Let $\mathbf{Z} = \left[\mathbf{w}^T, \boldsymbol{\gamma}^T,\mathbf{t}^T,\mathbf{d}^T\right]^T $ be a vector stacking all the optimization variables of the problem \eqref{eq:Opt6}. Let $\widehat{\mathbf{Z}}_v$ be the variables computed at iteration $v$ as the optimal solution of problem \eqref{eq:Opt6}, and let $\mathbf{\tilde{Z}} = \left[\mathbf{\tilde{w}}^T, \boldsymbol{\tilde{\gamma}}^T,\mathbf{\tilde{t}}^T,\mathbf{\tilde{d}}^T\right]^T $ be the point at which we compute the approximate solution of problem \eqref{eq:Opt6} at iteration $v$. Furthermore, let $\mathcal{Z}$ denote the convex feasible set of problem \eqref{eq:Opt6} defined by constraints \eqref{eq:n8}--\eqref{eq:n9}. The algorithm starts by initializing the vector $ \mathbf{\tilde{Z}}$, around which we compute the next iteration. The initialization process starts by computing feasible MRC beamformers for the users' messages when considering TIN scheme, and for both private and common messages when considering RS-CMD scheme. Based on this initialization, we compute the vector $\boldsymbol{\tilde{\gamma}}$ using equations \eqref{eq:S3} and \eqref{eq:S6}. Note that the sets $\left\lbrace\Phi_k \right\rbrace_{k=1}^K $ are computed using Algorithm 1. The initialization of vectors $\mathbf{\tilde{t}}, \mathbf{\tilde{d}}$ {is done solving \eqref{eq:n4}--\eqref{eq:n3} by replacing inequalities with equalities.}
	After solving the problem \eqref{eq:Opt6} at iteration $v$, we get the optimal values stacked in vector $\widehat{\mathbf{Z}}_v$. Using $\widehat{\mathbf{Z}}_v$, we compute the vector $\mathbf{\tilde{Z}}$ for the next iteration. The detailed steps of the iterative algorithm to solve problem \eqref{eq:Opt2} are summarized in Algorithm 2 description below.
\begin{algorithm}[h]
\caption{Inner convex approximation of \eqref{eq:Opt2}.}
\begin{algorithmic}[1]
\State Initialize:  $v\leftarrow 0$, $\mathbf{\tilde{Z}} \in \mathcal{Z}$, $\rho_1 > 0, \rho_2 > 0$, $\xi \ll 1, \xi \in \mathbb{R}_+$ and $\theta= \theta_v$.
\While{$\mathbf{\widehat{Z}}_{v}$ not a stationary solution of \eqref{eq:Opt2}}
\State Solve the convex problem \eqref{eq:Opt6} and compute $\mathbf{\widehat{Z}}_{v}$
\State $\mathbf{\tilde{Z}} \leftarrow \mathbf{\tilde{Z}} + \beta_v\left(\widehat{\mathbf{Z}}_{v} - \mathbf{\tilde{Z}} \right) $  for some $\beta_v \in (0,1]$
\If{$\theta_v \geq \xi$}
 $\theta_v = \delta \theta_v$ and $\delta \in \left(0, 1\right)  $
 \EndIf
\State $v\leftarrow v+1$
\EndWhile
\end{algorithmic}
\label{Alg2}
\end{algorithm}
{The following theorem proves that Algorithm 2 produces a stationary solution of problem \eqref{eq:Opt2}.}
{
	\begin{theorem}
		Let  $\rho_1, \rho_2 > 0$, and let the step size sequence $\left\lbrace \beta_v \right\rbrace$ satisfy $\beta_v \in (0,1]$, $\beta_v \rightarrow 0$, and $\sum\nolimits_{v}\beta_v = +\infty$. Then $\left\lbrace \widehat{\mathbf{Z}}_v\right\rbrace $, the sequence generated by Algorithm \ref{Alg2}, is bounded, and converges to $\left\lbrace \widehat{\mathbf{Z}}_{v}^*,\right\rbrace $, which is a stationary solution of problem \eqref{eq:Opt6}, such that $\left(\boldsymbol{\gamma}^*, \mathbf{t}^*, \mathbf{d}^*\right) > 0$. Therefore, (according to proposition 1), $\left(\mathbf{w}^*, \boldsymbol{\gamma}^*\right)$ is also a stationary point of problem \eqref{eq:Opt2}.
	\end{theorem}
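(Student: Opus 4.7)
The plan is to invoke a standard convergence result for the successive inner convex approximation (SCA) framework, and then transfer the resulting stationarity back to problem \eqref{eq:Opt2} using Proposition~1. The first step is to verify that each surrogate $\tilde g_{2n},\tilde g_3,\ldots,\tilde g_8$ constructed in \eqref{eq:g2}--\eqref{eq:g8} satisfies the three hallmark properties of a valid inner-convex approximation at any feasible point $\mathbf{\tilde Z}$: (i) convexity in the primal variables; (ii) equality of value with the original non-convex constraint function at $\mathbf{\tilde Z}$; and (iii) equality of gradient at $\mathbf{\tilde Z}$. Convexity is built in by construction, because every surrogate was obtained either by linearising the concave summand of a difference-of-convex decomposition (as carried out in \eqref{eq:conP} and recorded in Proposition~2) or by linearising a reciprocal-quadratic expression along the lines of \eqref{eq:18}--\eqref{eq:19}; properties (ii) and (iii) are immediate from the first-order-Taylor form of the linearisation. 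Proposition~2 already supplies the upper-bound property for $\tilde g_{2n}$, and the analogous upper bounds for $\tilde g_3,\ldots,\tilde g_8$ follow from identical convexity arguments.

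Next, I would observe that the approximate subproblem \eqref{eq:Opt6} is a strongly concave maximisation over a closed convex set. Strong concavity follows from the proximal term $g_9$ in \eqref{eq:21} with $\rho_1,\rho_2>0$, which guarantees a unique optimiser $\widehat{\mathbf Z}_v$ at every inner iteration. The per-BS power constraint \eqref{eq:S} makes the beamformer iterates lie in a compact set, and combined with the surrogate bounds linking $(\mathbf t,\mathbf d)$ to $(\mathbf w,\boldsymbol\gamma)$, this yields boundedness of $\{\widehat{\mathbf Z}_v\}$. The outer update of Algorithm~\ref{Alg2} is exactly the classical averaging rule $\mathbf{\tilde Z}\leftarrow\mathbf{\tilde Z}+\beta_v(\widehat{\mathbf Z}_v-\mathbf{\tilde Z})$ under the diminishing step-size regime $\beta_v\in(0,1]$, $\beta_v\to 0$, $\sum_v\beta_v=+\infty$. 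Combining these three ingredients places us exactly in the setting of the standard SCA convergence theorem for non-convex constrained optimisation (as in, e.g., the Scutari \emph{et al.} framework), which yields that every limit point of $\{\widehat{\mathbf Z}_v\}$ is a stationary solution of \eqref{eq:Opt6}. The $l_0$-smoothing parameter $\theta_v$ is geometrically shrunk until it falls below the threshold $\xi$ in finitely many outer iterations, after which the surrogate is frozen and the convergence argument applies to the fixed final approximate problem.

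Finally, I would invoke Proposition~1 to transfer the stationarity property from \eqref{eq:Opt6} to \eqref{eq:Opt2}: the hypothesis $(\boldsymbol\gamma^*,\mathbf{t}^*,\mathbf{d}^*)>0$ at the limit point is precisely what is needed to ensure the slack reformulation is tight, which is the regularity condition Proposition~1 implicitly requires in order to identify the stationary points of the two problems. The main technical obstacle will be verifying that a constraint qualification (typically MFCQ) holds along the iterates, so that the classical SCA convergence theorem applies without modification; this reduces to showing that strict feasibility of \eqref{eq:Opt6} is preserved along the trajectory, which in turn follows from initialising at the feasible MRC point and from the convex-combination structure of the outer update, which keeps $\mathbf{\tilde Z}$ inside the inner-convex approximation of the feasible set of \eqref{eq:Opt5} at every iteration.
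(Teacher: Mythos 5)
Your proposal follows essentially the same route as the paper: both reduce the theorem to verifying that the surrogates of problem \eqref{eq:Opt6} satisfy the standard validity conditions of the successive inner-convex approximation framework (value and gradient matching at the approximating point, convexity, the upper-bound property, and continuity --- the paper's conditions C1--C6, checked explicitly for $\tilde{g}_{2n}$ and asserted for the rest) and then invoke the corresponding convergence theorem of the cited SCA reference before transferring stationarity back to \eqref{eq:Opt2} via Proposition~1. Your treatment is, if anything, slightly more complete than the paper's, since you also address boundedness of the iterates, the eventual freezing of the smoothing parameter $\theta_v$, and the constraint-qualification issue that the paper leaves implicit.
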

}
\begin{proof}
	The steps of the proof rely on showing that the objective and constraints of problem \eqref{eq:Opt6} satisfy the conditions of \cite[Sec. II]{7776948}, which would guarantee the convergence to a stationary point as illustrated in \cite[Theorem 2]{7776948}. Towards this end, we show next that the function $\tilde{g}_{2n}(\mathbf{t}, \mathbf{d}; \mathbf{\tilde{t}}, \mathbf{\tilde{d}})$ satisfies the following properties:
	\begin{itemize}
		\item[$\mathrm{C}1 )$]   $\tilde{g}_{2n}(\mathbf{\tilde{y}},\mathbf{\tilde{y}}) = g_{{2n}}(\mathbf{\tilde{y}})$ 
		\item[$\mathrm{C}2 )$]  $\tilde{g}_{2n}(\mathbf{{y}},\mathbf{\tilde{y}}) \geq g_{2n}(\mathbf{{y}}),\quad  \forall \mathbf{\tilde{y}} \in \mathcal{Z}$
		\item[$\mathrm{C}3 )$]  $\tilde{g}_{2n}(\bullet,\mathbf{\tilde{y}})$ is a convex function,$\quad  \forall$  $\mathbf{\tilde{y}} \in \mathcal{Z}$
		\item[$\mathrm{C}4 )$]  $\tilde{g}_{2n}(\bullet,\bullet)$ is a continuous function on the feasible set.
		\item[$\mathrm{C}5 )$]  $\nabla_{\mathbf{{y}}}\,\tilde{g}_{2n}(\mathbf{\tilde{y}},\mathbf{\tilde{y}})= \nabla_{\mathbf{{y}}}\,{g}_{2n}(\mathbf{\tilde{y}}) $
		\item[$\mathrm{C}6 )$] The function $\nabla_{\mathbf{{y}}}\,\tilde{g}_{2n}(\bullet,\bullet)$ is continuous on the feasible set
	\end{itemize}
	C1 is verified by substituting $\mathbf{y} = \left[ \mathbf{{t}}^T, \mathbf{{d}}^T\right]^T$ in \eqref{eq:g2} by $\mathbf{\tilde{y}} = \big[ \mathbf{\tilde{t}}^T,  \mathbf{\tilde{d}}^T\big]^T$. Comparing the result with $g_{2n}(\mathbf{\tilde{y}})$ then yields the equality. C2 follows directly from proposition 2. C3 also holds, since the function $\tilde{g}_{2n}(\bullet,\mathbf{\tilde{y}})$ with fixed $\mathbf{\tilde{y}}$ consists of a convex quadratic function plus a linear function, which is convex. Further, the function $\tilde{g}_{2n}(\bullet,\bullet)$ is a difference of two convex functions, and so C4 is also true. Finally, to prove C5 and C6, take the partial derivative of the function $g_{2n}(\bullet,\mathbf{\tilde{y}})$ as follows:
	{
		\begin{align} \label{eq:g5}
		\nabla_{\mathbf{{y}}}\,\tilde{g}_{2n}(\mathbf{{y}},\mathbf{\tilde{y}})= \begin{cases}
		\frac{\partial \tilde{g}_{2n}(\bullet,\mathbf{\tilde{y}})}{\partial {t_{n,k}}} &\triangleq \sum\limits_{k \in \mathcal{K} }\sum\limits_{o \in \left\lbrace p, c \right\rbrace }\bigl(\bigl(t_{n,k}^o +  d_{k}^o\bigr)
		- \tilde{t}_{n,k}^o\bigr) \quad \forall n\\ \frac{\partial   \tilde{g}_{2n}(\bullet,\mathbf{\tilde{y}})}{\partial {d_k}} &\triangleq \sum\limits_{k \in \mathcal{K} }\sum\limits_{o \in \left\lbrace p, c \right\rbrace }\bigl(\bigl(t_{n,k}^o +  d_{k}^o\bigr) 
		- \tilde{d}_{k}^o\bigr) 
		\end{cases}
		\end{align}
	}
	Similarly, the partial derivative of ${g}_{2n}(\mathbf{{y}})$ is:
	{
		\begin{align}\label{eq:S60}
		\nabla_{\mathbf{{y}}}\,{g}_{2n}(\mathbf{{y}})= \begin{cases}
		\frac{\partial \tilde{g}_{2n}(\mathbf{{y}})}{\partial {t}_{n, k}} &\triangleq \sum_{k \in \mathcal{K} }\left(d_{k}^p +  d_{k}^c\right)
		\\ \frac{\partial \tilde{g}_{2n}(\mathbf{{y}})}{\partial {d_k}} &\triangleq \sum_{k \in \mathcal{K} }\left(t_{n,k}^p +  t_{n,k}^c\right)   
		\end{cases}
		\end{align}
	}
	C5 then follows by substituting $\mathbf{{y}}$ with $\mathbf{\tilde{y}}$ in both \eqref{eq:g5} and \eqref{eq:S60}. C6 also holds since that the function $\nabla_{\mathbf{{y}}}\,\tilde{g}_{2n}(\bullet,\bullet)$ is bilinear. {The above proof verifies that the function $\tilde{g}_{2n}(\mathbf{t}, \mathbf{d}; \mathbf{\tilde{t}}, \mathbf{\tilde{d}})$ satisfies the properties C1-C6. One can similarly check that all other functions associated with the optimization problem \eqref{eq:Opt6} also satisfy C1-C6, which completes the proof.}
\end{proof}
After solving problem \eqref{eq:Opt5}, we can determine the clusters for private and common messages as follows:
\begin{align}
\mathcal{K}_{n}^{p} & = \left\lbrace k|\quad  \normV[\big]{\mathbf{w}_{n, k}^p}_{2}^{2} \geq \epsilon_1 \right\rbrace, \label{eq:e1}\\
\mathcal{K}_{n}^{c} & =  \left\lbrace k|\quad \normV[\big]{\mathbf{w}_{n, k}^c}_{2}^{2} \geq \epsilon_2 \right\rbrace \label{eq:e2},
\end{align}
{\color{black} Where $\epsilon_1$ and $\epsilon_2$ are positive constants, which are set in the simulations section to -80 dBm/Hz.}

\subsection{Beamforming and RS mode Selection}
After fixing the clusters $\mathcal{K}_{n}^{p} $ and $\mathcal{K}_{n}^{c}$ as described above, the paper now focuses on determining the beamforming vectors by revisiting problem \eqref{eq:Opt6}. Note that when the clusters are fixed, the optimization variables become the group sparse beamforming vectors $\left\lbrace \mathbf{w}_{k}^p, \mathbf{w}_{k}^c, \bm{\gamma}_{k}| \forall k \in \mathcal{K}\right\rbrace$. Mathematically, the optimization problem \eqref{eq:Opt6} for fixed clusters can be written as:
\begin{subequations}\label{eq:Opt7}
	\begin{align}
	&\underset{\left\lbrace \mathbf{w}_{k}^p, \mathbf{w}_{k}^c, \bm{\gamma}_{k}| \forall k \in \mathcal{K}\right\rbrace}{\text{maximize}}\quad \sum_{k = 1}^{K} g_1\left(\gamma_{k}^p, \gamma_{k}^c\right) - g_9\left({\mathbf{w}}, \boldsymbol{{\gamma}}; \tilde{\mathbf{w}}, \boldsymbol{\tilde{\gamma}}\right)\\
	&\text{subject to}\quad g_{10}(\gamma_{k}^p, \tilde{\gamma}_{k}^p, \gamma_{k}^c, \tilde{\gamma}_{k}^c) \leq 0\\
	&\sum\limits_{k \in \mathcal{K}_{n}^{p}}{\normV[\big]{\mathbf{w}_{n, k}^p}_{2}^{2}} + \sum\limits_{k \in \mathcal{K}_{n}^{c}}{\normV[\big]{\mathbf{w}_{n, k}^c}_{2}^{2}} \leq P_{n}^{\text{Max}} \quad \forall n\in \mathcal{N}\\
	& g_7\left(\mathbf{w}, \gamma_{k}^p; \tilde{\mathbf{w}}, \tilde{\gamma}_{k}^p\right)  \leq 0 \\
	& g_8\left(\mathbf{w}, \gamma_{k}^c;\tilde{\mathbf{w}}, \tilde{\gamma}_{k}^c\right) \leq 0 \label{eq:n10}	
	\end{align}
\end{subequations}
where $g_{10}(\gamma_{k}^p, \tilde{\gamma}_{k}^p, \gamma_{k}^c, \tilde{\gamma}_{k}^c) \leq 0$ represents the backhaul constraint, and where the function $g_{10}(\cdot)$ is defined as:
\begin{align} \label{eq:g10}
g_{10}(\gamma_{k}^p, \tilde{\gamma}_{k}^p, \gamma_{k}^c, \tilde{\gamma}_{k}^c) & \triangleq
\sum\limits_{k \in \mathcal{K}_{n}^{p}} g_5\left(\gamma_{k}^p, \tilde{\gamma}_{k}^p\right)
\nonumber \\ &
+ \sum\limits_{k \in \mathcal{K}_{n}^{c}} g_6\left(\gamma_{k}^c, \tilde{\gamma}_{k}^c\right) -  C_n/B.
\end{align}
{\color{black} We note that problem \eqref{eq:Opt7} is similar to problem \eqref{eq:Opt6}; however, the association variables are fixed here and the goal is to find beamforming vectors with good quality. Toward this goal, we suggest using Algorithm 3 shown below to obtain a stationary solution $\left(\mathbf{w}^*, \boldsymbol{\gamma}^*\right)$ to the beamforming problem with fixed clusters.}
Here, $\mathbf{Y} = \left[\mathbf{w}^T, \boldsymbol{\gamma}^T\right]^T$, $\mathbf{\tilde{Y}} = \left[\mathbf{\tilde{w}}^T, \boldsymbol{\tilde{\gamma}}^T\right]^T $ and $\mathcal{Y}$ is the feasible set of problem \eqref{eq:Opt7}.
\begin{algorithm}[h]
\caption{Inner convex approximation of beamforming problem with fixed clusters.}
\begin{algorithmic}[1]
\State Initialize:  $v\leftarrow 0$, $\mathbf{\tilde{Y}} \in \mathcal{Y}$ and $\rho_1 > 0, \rho_2 > 0$
\While{$\mathbf{\widehat{Y}}_{v}$ not a stationary solution.}
\State Solve the convex problem \eqref{eq:Opt7} and compute $\mathbf{\widehat{Y}}_{v}$
\State $\mathbf{\tilde{Y}} \leftarrow \mathbf{\tilde{Y}} + \beta_v\left(\widehat{\mathbf{Y}}_{v} - \mathbf{\tilde{Y}} \right) $  for some $\beta_v \in (0,1]$
\State $v\leftarrow v+1$
\EndWhile
\end{algorithmic}
\label{Alg3}
\end{algorithm}
\subsection{Complexity Analysis}
{\color{black} The overall approach of joint clustering, RS mode and beamforming vectors design is split into two stages. In the first one we use Algorithm 2 to find the clusters of BSs which serve the private message and common message of each user respectively. After that, we use the Algorithm 3 to find a high-quality solution of beamforming vectors and RS which are also feasible to the original problem \eqref{eq:Opt2}. In the following, we describe the overall complexity of such an approach.}

At each iteration of Algorithm 2, which is used to determine the clusters, we need to solve a convex problem, \text{precisely problem \eqref{eq:Opt6}}, which has a logarithm plus a proximal term as an objective function. The logarithmic part can be linearised as in equation \eqref{eq:30}, which gives a quadratic convex problem which can be easily cast as a second order cone program (SOCP); see \cite{8254700} and references therein. SOCP problems can be solved using interior-point methods with a complexity of $\mathcal{O}(NKL)^{3.5 }$ via general-purpose solvers, e.g. SDPT3 or MOSEK. {After clustering, the beamforming vectors and the RS mode are determined using Algorithm \ref{Alg3}, which can similarly cast as an SOCP using a similar argument as above. Let $\text{V}_{\text{max}}$ be the worst-case fixed number of iterations needed for the Algorithm 2 (or Algorithm \ref{Alg3}) to converge. The overall computational complexity to implement Algorithm \ref{Alg2} and Algorithm \ref{Alg3} becomes, therefore, $2 \text{V}_{\text{max}}(NKL)^{3.5}$. {Note this is a rather an upper bound on the complexity metric, since solving the sparse optimization problem \eqref{eq:Opt7} is typically much faster than solving problem \eqref{eq:Opt6}}, and so it needs a smaller number of iterations for convergence.}
\section{Numerical Results}
In this section, we present an extensive set of numerical simulations to demonstrate the performance of our proposed approach. The system setup considers a C-RAN consisting of a 7-cell wrapped-around network. In each cell, there exists a BS at the center, which is connected to the cloud via a limited capacity backhaul link. The simulations results illustrated in this section assume the parameters summarized in table II, unless mentioned otherwise in the text. In particular, for illustration, all BS's share the same backhaul constraint, and all BS's operate at the same nominal maximum transmission power. 

In addition to the dynamic clustering algorithms applied for both TIN and RS-CMD, in which we jointly optimize the BSs clusters together with the beamforming vectors, we also consider a static clustering algorithm. Such static clustering, considered herein as a clustering baseline approach, adopts a path-loss information-based approach, and so the beamforming vectors are optimized for fixed (static) clusters. In the following, we explain briefly the static TIN clustering as used in \cite{6920005}, and our extended version of this algorithm to fit the RS-CMD framework.
\begin{itemize}
	\item \textbf{Static TIN}: This scheme is based on clustering procedure described in \cite[Algorithm 3]{6920005}. Once the clusters are fixed, we can solve problem \eqref{eq:Opt7} to determine the optimal beamforming vectors. 
	\item \textbf{Static RS-CMD}: In this case, we extend the previous procedure to accommodate clusters for private and common messages for each user. Again, when the clusters are fixed, we use Algorithm 3 to solve problem \eqref{eq:Opt7} over the private and common beamforming vectors.
\end{itemize}
{\begin{center}
		
		\begin{tabular}{ |p{4.5cm}|p{3.2cm}|  }
			\hline
			\multicolumn{2}{|c|}{Simulation Parameters} \\
			\hline
			Network Parameter & Value\\
			\hline
			\hline
			Channel Bandwidth   & 10 MHz   \\
			\hline
			Number of Antennas &   8 \\
			\hline
			Maximum transmission Power & 30 dBm  \\
			\hline
			Antenna gain    & 15 dBi\\
			\hline
			Background noise &   -169 dBm/Hz\\
			\hline
			Path-loss & $140.7 + 36.7 \log_{10}(d)$ \\
			\hline
			Log-normal shadowing& 8 dB  \\
			\hline
			Rayleigh small scale fading& 0 dB  \\
			\hline
		\end{tabular}
	\end{center}
}
Further, we assume that each user can decode only one additional common message besides its own common message (i.e., D = 1); however, a common message of a user can be decoded by multiple users. Such strategy helps reducing the complexity of the overall algorithm, by limiting the number of successive cancellation stages.
\subsection{Impact of Backhaul Capacity}
First, we evaluate the performance of RS-CMD scheme in C-RAN, against the state-of-the art TIN scheme. For both schemes, we consider dynamic and static clustering procedures. In case of dynamic clustering, we use Algorithm 2 and equations \eqref{eq:e1} and \eqref{eq:e2} to determine the clusters. Then, we use Algorithm 3 with few iterations, starting from the solution computed at last iteration of Algorithm 2, to compute the beamforming vectors. In the case of fixed static clusters, we apply Algorithm 3 directly to compute the beamforming vectors for both TIN and RS-CMD.

We first consider a network in which the inter-cell distance between two neighboring BSs is 200m. Fig. \ref{Fig:1} shows the achievable sum rate as a function of backhaul capacity when applying these schemes, where we use RS-CMD in both static and dynamic clustering with parameter values $\mu= 25$ and $\mu= 60$. The figure shows that our proposed algorithm, namely RS-CMD with dynamic clustering and $\mu= 25$, outperforms the state-of-the art TIN. In fact, compared with static TIN, RS-CMD with $\mu= 25$ has a significant gain up to 42.3 \% at 950 Mbps backhaul capacity.

Fig. \ref{Fig:1}, particularly, distinguishes between two backhaul capacity regions. In the low backhaul capacity region, the performance is mainly limited by capacity of backhaul links. In this region, due to the scarcity of backhaul resources, a carefully chosen set of users should be assigned to each BS in order to optimize the performance with the available backhaul resources. This explains why the static clustering schemes perform poorly in this region, while the dynamic schemes achieve sum-rates which are close to the capacity upper bound, i.e., $7C_n$. As we move towards higher backhaul capacities, we note that the sum-rate of all schemes increases, especially the proposed approaches which show a significantly higher sum-rate as compared to the state-of-the art schemes under both dynamic and static clustering, i.e., static TIN and dynamic TIN. We further note that as we approach the interference limited region by increasing the backhaul capacity, the effect of RS-CMD becomes more pronounced. This is expected, since our scheme is especially designed to mitigate interference, and so its performance gets better as the interference becomes the limiting factor. Interestingly, the role of dynamic clustering becomes less significant in the interference limited regime. Here, in the 200 meters inter-cell distance, it is likely that a significant number of users have strong channel gains to nearby BSs, which have enough backhaul resources in the interference limited regime. Such observation makes the impact of clustering in this region less significant as compared to RS-CMD. Thus, we observe that RS-CMD with static clustering outperforms dynamic TIN. Finally, in Fig. \ref{Fig:1}, we observe that RS-CMD scheme with $\mu= 25$ (referred to as RS-CMD 25) performs better than RS-CMD with $\mu= 60$. Such fact is also expected, since increasing $\mu$ increases the number of users which participate in decoding the common messages of their interferers, which adds more constraints to the optimization problem, and reduces the value of the optimized objective function, as clearly illustrated in Fig. \ref{Fig:1}.
To best illustrate the performance of the proposed algorithm in a larger inter-cell distance, Fig. \ref{Fig:2} plots the sum-rate across the network versus the backhaul capacity, where the inter-cell distance is set to 400 m. In this case, the cell-edge users are more susceptible to interference from BSs in neighboring cells. On the other hand, the users located near a cell center have better channel gains to BSs in their cell, and weak interference channels to other BSs in other cells. Under such relatively large inter-cell distance, the size of clusters becomes smaller, because only few BSs have strong channel gains to each user as compared to the small inter-cell distance. This explains why Fig. \ref{Fig:2} shows that all schemes perform well in the backhaul limited regime. However, as we approach the interference limited regime, the impact of clustering and RS-CMD becomes more significant. In this network, RS-CMD with $\mu= 25$ achieves a gain up to 61.11 \% compared to static TIN, which best highlights the significant gain harvested by common message decoding in C-RAN systems, as illustrated next.
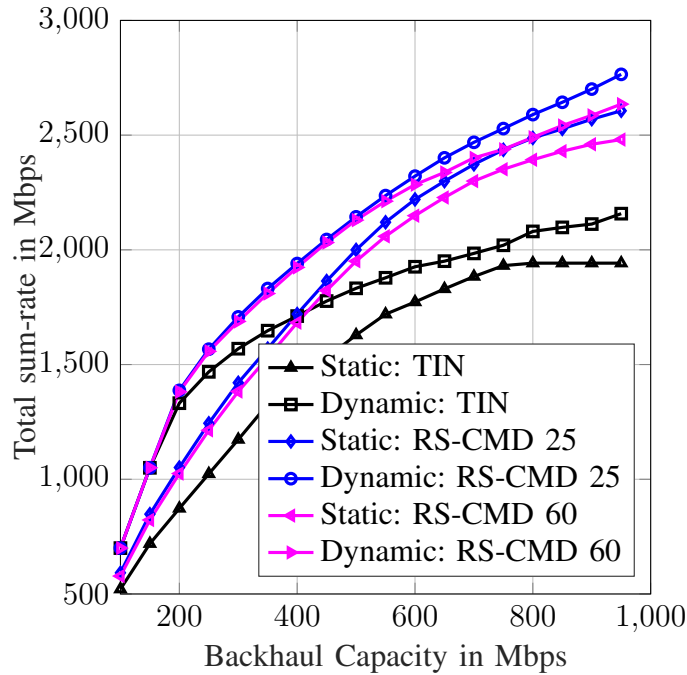
\begin{figure}
\centering
%
%
%
%
%
\definecolor{mycolor1}{rgb}{1.00000,0.00000,1.00000}%
\begin{tikzpicture}

\begin{axis}[%
width=3.398in,
height=3.627in,
at={(0.471in,0.516in)},
scale = 1 , 
xmin=100,
xmax=1000,
xlabel style={font=\color{white!15!black}},
xlabel={Backhaul Capacity in Mbps},
ymin=500,
ymax=3000,
ylabel style={font=\color{white!15!black}},
ylabel={Total sum-rate in Mbps},
axis background/.style={fill=white},
xmajorgrids,
ymajorgrids,
legend style={legend cell align=left, align=left, draw=white!15!black},
legend pos= south east
]
\addplot [color=black, line width=1.2pt, mark=triangle, mark options={solid, black}]
  table[row sep=crcr]{%
100	521.130461538718\\
150	719.172213206136\\
200	872.878863117243\\
250	1023.48141326181\\
300	1173.21970262788\\
350	1326.77344241968\\
400	1427.88166064789\\
450	1529.89279567917\\
500	1628.67530778376\\
550	1719.37038820565\\
600	1772.6615280402\\
650	1829.75418772304\\
700	1883.67753568413\\
750	1931.23209770592\\
800	1942.0052743667\\
850	1942.00492147335\\
900	1942.00528380571\\
950	1942.00528853911\\
};
\addlegendentry{Static: TIN}

\addplot [color=black, line width=1.2pt, mark=square, mark options={solid, black}]
  table[row sep=crcr]{%
100	700.000000000000\\
150	1049.99955301343\\
200	1331.66420467329\\
250	1468.2610892887\\
300	1569.08905270233\\
350	1647.61275526232\\
400	1711.2656236759\\
450	1776.65604229617\\
500	1832.31854103045\\
550	1878.18713331867\\
600	1926.27545731817\\
650	1950.69428431004\\
700	1984.57530773848\\
750	2019.95631943824\\
800	2080.34055763866\\
850	2097.88215379049\\
900	2112.34471613307\\
950	2157.70145553076\\
};
\addlegendentry{Dynamic: TIN}

\addplot [color=blue, line width=1.2pt, mark=diamond, mark options={solid, blue}]
  table[row sep=crcr]{%
100	592.345809882608\\
150	847.963179426302\\
200	1051.25939611577\\
250	1244.11503890578\\
300	1420.71462490165\\
350	1569.72372207245\\
400	1720.23030014541\\
450	1864.17198095936\\
500	1999.76647602218\\
550	2120.32536420823\\
600	2219.7669037913\\
650	2299.24101056582\\
700	2372.63269894403\\
750	2435.96370369262\\
800	2487.66288663663\\
850	2527.14454043935\\
900	2569.44097273318\\
950	2606.06857047616\\
};
\addlegendentry{Static:  RS-CMD 25}

\addplot [color=blue, line width=1.2pt, mark=o, mark options={solid, blue}]
  table[row sep=crcr]{%
100	699.999999723919\\
150	1049.99999962279\\
200	1387.46361144578\\
250	1566.99407894065\\
300	1707.82421789541\\
350	1831.34445387554\\
400	1940.27894942541\\
450	2045.07241690507\\
500	2143.87535477877\\
550	2236.4276241738\\
600	2321.2040493497\\
650	2401.0130953672\\
700	2468.69295202194\\
750	2528.52506890446\\
800	2589.47086927709\\
850	2643.18988163853\\
900	2700.72354007572\\
950	2764.8652702201\\
};
\addlegendentry{Dynamic: RS-CMD 25}

\addplot [color=mycolor1, line width=1.2pt, mark=triangle, mark options={solid, rotate=90, mycolor1}]
  table[row sep=crcr]{%
100	578.146423614705\\
150	822.545873872449\\
200	1024.89669618023\\
250	1213.48133511015\\
300	1382.46751996317\\
350	1530.83320775309\\
400	1682.70200303129\\
450	1822.82595967294\\
500	1950.79786775306\\
550	2058.95226401185\\
600	2148.87870329703\\
650	2228.16254466261\\
700	2299.85793045079\\
750	2351.27638377444\\
800	2392.85360964534\\
850	2429.85355195681\\
900	2460.21393710048\\
950	2480.65576176754\\
};
\addlegendentry{Static: RS-CMD 60}

\addplot [color=mycolor1, line width=1.2pt, mark=triangle, mark options={solid, rotate=270, mycolor1}]
  table[row sep=crcr]{%
100	699.999999452296\\
150	1049.99999677896\\
200	1381.25536878325\\
250	1557.58722662779\\
300	1688.19751443136\\
350	1809.64799474146\\
400	1923.52948631252\\
450	2031.77999210639\\
500	2129.50122310027\\
550	2212.09404917809\\
600	2284.17338867144\\
650	2337.35926056367\\
700	2400.91545286976\\
750	2438.06750665592\\
800	2490.51762867892\\
850	2541.97260786391\\
900	2586.16179804772\\
950	2635.09981804307\\
};
\addlegendentry{Dynamic:  RS-CMD 60}

\end{axis}
\end{tikzpicture}%
%
	\caption{The performance of all studied schemes for a C-RAN with 7 BSs serving 28 users and an inter-cell distance of 200 m}
	\label{Fig:1}
\end{figure}
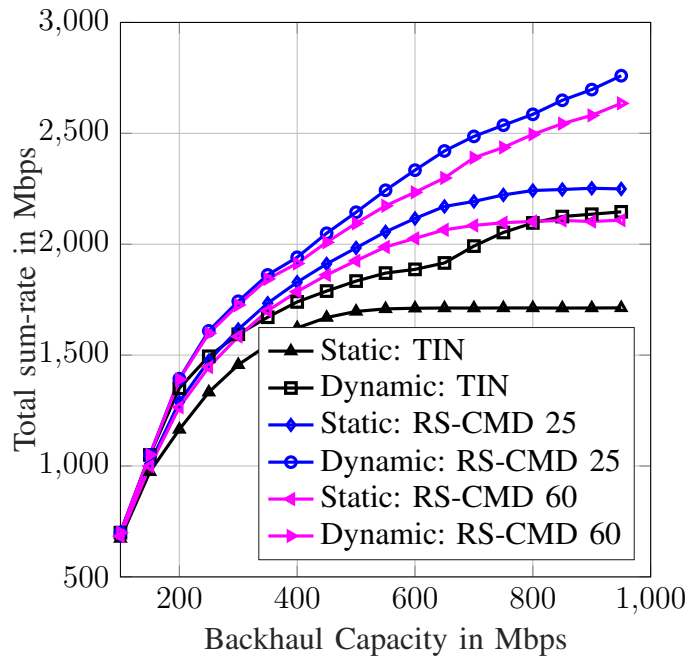
\begin{figure}
\centering
	\definecolor{mycolor1}{rgb}{1.00000,0.00000,1.00000}%
\begin{tikzpicture}

\begin{axis}[%
width=3.398in,
height=3.527in,
at={(0.471in,0.516in)},
scale = 1 , 
xmin=100,
xmax=1000,
xlabel style={font=\color{white!15!black}},
xlabel={Backhaul Capacity in Mbps},
ymin=500,
ymax=3000,
ylabel style={font=\color{white!15!black}},
ylabel={Total sum-rate in Mbps},
axis background/.style={fill=white},
xmajorgrids,
ymajorgrids,
legend style={at={(0.47,0.247)}, anchor=south west, legend cell align=left, align=left, draw=white!15!black},
legend pos= south east
]
\addplot [color=black, line width=1.2pt, mark=triangle, mark options={solid, black}]
  table[row sep=crcr]{%
100	673.084062504045\\
150	973.781592695328\\
200	1164.85088120129\\
250	1332.75124308304\\
300	1456.25369295679\\
350	1545.77823331295\\
400	1622.51219710153\\
450	1669.32073658826\\
500	1696.94345729242\\
550	1708.39768722079\\
600	1711.31720592546\\
650	1712.34441611027\\
700	1712.06168856123\\
750	1712.60102184174\\
800	1712.57326563505\\
850	1712.08557881241\\
900	1712.44894153624\\
950	1712.98356489681\\
};
\addlegendentry{Static: TIN}

\addplot [color=black, line width=1.2pt, mark=square, mark options={solid, black}]
  table[row sep=crcr]{%
100	699.398602928899\\
150	1049.80608611136\\
200	1350.87380001752\\
250	1494.50426189469\\
300	1593.91633937845\\
350	1670.81828861986\\
400	1739.79132583867\\
450	1788.7217964419\\
500	1833.8212042161\\
550	1869.84139176853\\
600	1886.57089878533\\
650	1915.54626845489\\
700	1991.52359604211\\
750	2052.9857790631\\
800	2095.9590655054\\
850	2124.70297538752\\
900	2135.18882858291\\
950	2145.94121610838\\
};
\addlegendentry{Dynamic: TIN}

\addplot [color=blue, line width=1.2pt, mark=diamond, mark options={solid, blue}]
  table[row sep=crcr]{%
100	687.22267656296\\
150	1011.08979324158\\
200	1288.50672750262\\
250	1474.64204608777\\
300	1615.76366164816\\
350	1734.03439300216\\
400	1828.66311934323\\
450	1911.18751656314\\
500	1983.42723891793\\
550	2055.7140073212\\
600	2116.04755677997\\
650	2170.25552406835\\
700	2192.80451381729\\
750	2222.21688896694\\
800	2242.10773916356\\
850	2246.72498240683\\
900	2252.1694011709\\
950	2249.29844833492\\
};
\addlegendentry{Static: RS-CMD 25}

\addplot [color=blue, line width=1.2pt, mark=o, mark options={solid, blue}]
  table[row sep=crcr]{%
100	699.999998306641\\
150	1049.99999958542\\
200	1393.38350023744\\
250	1609.22563381344\\
300	1742.63028008506\\
350	1861.55731565466\\
400	1940.97144637327\\
450	2049.41978396397\\
500	2144.15618188253\\
550	2243.29175051538\\
600	2333.86779183455\\
650	2420.42614595974\\
700	2486.15893782016\\
750	2536.51929694976\\
800	2585.6281449655\\
850	2648.94791460557\\
900	2697.16682993817\\
950	2759.74492752895\\
};
\addlegendentry{Dynamic: RS-CMD 25}

\addplot [color=mycolor1, line width=1.2pt, mark=triangle, mark options={solid, rotate=90, mycolor1}]
  table[row sep=crcr]{%
100	683.667004377584\\
150	999.396419878925\\
200	1265.01241819581\\
250	1446.17643334548\\
300	1583.89205158909\\
350	1699.92936330818\\
400	1786.02443381299\\
450	1861.21786663395\\
500	1925.56997501014\\
550	1987.63794982423\\
600	2026.3069412731\\
650	2064.96100668824\\
700	2084.64044685176\\
750	2097.06527046255\\
800	2102.47599732789\\
850	2107.27879627586\\
900	2102.36871388621\\
950	2108.50302197113\\
};
\addlegendentry{Static: RS-CMD 60}

\addplot [color=mycolor1, line width=1.2pt, mark=triangle, mark options={solid, rotate=270, mycolor1}]
  table[row sep=crcr]{%
100	699.999998109579\\
150	1049.99999645073\\
200	1390.37266151013\\
250	1597.86299187961\\
300	1725.84025477605\\
350	1843.4871876204\\
400	1913.28115948805\\
450	2008.66397403353\\
500	2094.14571834496\\
550	2173.16708884078\\
600	2233.88820747848\\
650	2298.67705103615\\
700	2389.87872056595\\
750	2436.31526070736\\
800	2495.25623667964\\
850	2543.57926942167\\
900	2581.00201358348\\
950	2635.16869521345\\
};
\addlegendentry{Dynamic: RS-CMD 60}

\end{axis}
\end{tikzpicture}%
	\caption{The performance of all studied schemes for a C-RAN with 7 BSs serving 28 users and an inter-cell distance of 400 m}
    \label{Fig:2}
\end{figure}
\subsection{The Role of RS-CMD}
To illustrate the impact of common message decoding on the system performance, Fig. \ref{Fig:3} plots the sum rates of both the common part and the private part as a function of the backhaul capacity. The figure shows that the rate of the common message increases as the backhaul capacity increases, which highlights the impact of RS-CMD in the interference limited regime. Interestingly, as we increase the number of users which decode the common-messages of other users, i.e., as $\mu$ increases, Fig. \ref{Fig:4} shows that the rate of common messages decreases which reduces the total achievable rate, for the same reasons discussed earlier.
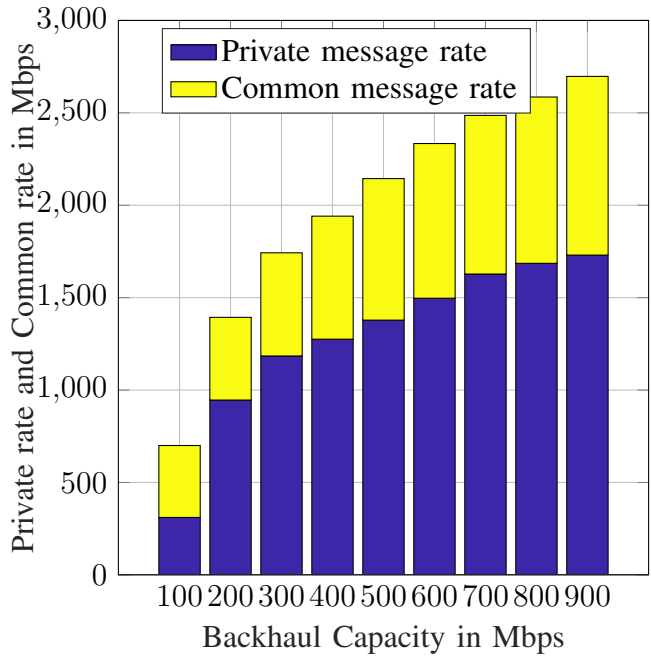
\begin{figure}[h]
\centering
%
%
\definecolor{mycolor1}{rgb}{0.24220,0.15040,0.66030}%
\definecolor{mycolor2}{rgb}{0.97690,0.98390,0.08050}%
\begin{tikzpicture}

\begin{axis}[%
width=3.398in,
height=3.527in,
at={(0.771in,0.516in)},
scale = 1 , 
bar width=0.55cm,
xmin=-20,
xmax=1020,
xtick={100, 200, 300, 400, 500, 600, 700, 800, 900},
xlabel style={font=\color{white!15!black}},
xlabel={Backhaul Capacity in Mbps},
ymin=0,
ymax=3000,
ylabel style={font=\color{white!15!black}},
ylabel={Private rate and Common rate in Mbps},
axis background/.style={fill=white},
xmajorgrids,
ymajorgrids,
legend style={at={(0.082,0.832)}, anchor=south west, legend cell align=left, align=left, draw=white!15!black}
]
\addplot[ybar stacked, fill=mycolor1, draw=black, area legend] table[row sep=crcr] {%
100	310.801116968573\\
200	945.974246013299\\
300	1184.47733492753\\
400	1275.31316026797\\
500	1378.37257742236\\
600	1496.98136810817\\
700	1627.8509010919\\
800	1685.73957315338\\
900	1730.30888404435\\
};
\addplot[forget plot, color=white!15!black] table[row sep=crcr] {%
-20	0\\
1020	0\\
};
\addlegendentry{Private message rate}

\addplot[ybar stacked, fill=mycolor2, draw=black, area legend] table[row sep=crcr] {%
100	389.198881338069\\
200	447.40925422414\\
300	558.14537865185\\
400	665.630904001988\\
500	765.692842551088\\
600	836.858869912573\\
700	858.307625271039\\
800	899.886815554301\\
900	966.857945893818\\
};
\addplot[forget plot, color=white!15!black] table[row sep=crcr] {%
-20	0\\
1020	0\\
};
\addlegendentry{Common message rate}

\end{axis}
\end{tikzpicture}%
	\caption{The sum-rate of common message and private message using RS-CMD with $\mu= 25$ for a C-RAN with 7 BSs serving 28 users and an inter-cell distance of 400 m}
	\label{Fig:3}
\end{figure}

\begin{figure}[h]
\centering
	\definecolor{mycolor1}{rgb}{0.24220,0.15040,0.66030}%
\definecolor{mycolor2}{rgb}{0.97690,0.98390,0.08050}%
\begin{tikzpicture}

\begin{axis}[%
width=3.398in,
height=3.527in,
at={(0.771in,0.516in)},
scale = 1 , 
bar width=0.55cm,
xmin=-20,
xmax=1020,
xtick={100, 200, 300, 400, 500, 600, 700, 800, 900},
xlabel style={font=\color{white!15!black}},
xlabel={Backhaul Capacity in Mbps},
ymin=0,
ymax=3000,
ylabel style={font=\color{white!15!black}},
ylabel={Private rate and Common rate in Mbps},
axis background/.style={fill=white},
xmajorgrids,
ymajorgrids,
legend style={at={(0.085,0.832)}, anchor=south west, legend cell align=left, align=left, draw=white!15!black}
]
\addplot[ybar stacked, fill=mycolor1, draw=black, area legend] table[row sep=crcr] {%
100	352.260447761441\\
200	1024.15420984988\\
300	1262.28028339145\\
400	1353.87927512695\\
500	1458.01346106726\\
600	1544.82701007015\\
700	1628.97071517746\\
800	1693.10651786749\\
900	1711.64953423792\\
};
\addplot[forget plot, color=white!15!black] table[row sep=crcr] {%
-20	0\\
1020	0\\
};
\addlegendentry{Private message rate}

\addplot[ybar stacked, fill=mycolor2, draw=black, area legend] table[row sep=crcr] {%
100	347.739550348139\\
200	366.218451660255\\
300	463.534547808295\\
400	559.33917977564\\
500	636.093733152171\\
600	689.061197408328\\
700	760.908005388488\\
800	802.14971881215\\
900	869.352479345565\\
};
\addplot[forget plot, color=white!15!black] table[row sep=crcr] {%
-20	0\\
1020	0\\
};
\addlegendentry{Common message rate}

\end{axis}
\end{tikzpicture}%
	\caption{The sum-rate of common message and private message using RS-CMD with $\mu= 60$ for a C-RAN with 7 BSs serving 28 users and an inter-cell distance of 400 m}
	\label{Fig:4}	
\end{figure}
\subsection{Transmission Power Impact on RS-CMD}
Fig. \ref{Fig:5} shows the sum-rate versus the maximum transmission power, so as to study the impact of transmission power on the performance of RS-CMD. We consider a C-RAN system of 7 BSs serving 28 users. Each BS has 750 Mbps backhaul.
The inter-cell distance is set to 200 m. The figure adopts the static clustering for both TIN and RS-CMD, and shows that the gain of RS-CMD compared to TIN increases as the power increases. For $\mu= 25$, the gain of RS-CMD over TIN increases from about 12\% at 0dBm maximum transmission power, to almost 19\% at 40dBm. The rationale for such observation is that as the transmission power increases, the interference experienced in the network increases, and so the role of RS-CMD as an interference mitigation technique becomes more pronounced.
\begin{figure}
\centering
	\definecolor{mycolor1}{rgb}{1.00000,0.00000,1.00000}%
\begin{tikzpicture}

\begin{axis}[%
width=3.398in,
height=3.527in,
at={(0.771in,0.516in)},
scale = 1 , 
xmin=0,
xmax=40,
xlabel style={font=\color{white!15!black}},
xlabel={Maximum Transmission Power in dBm per BS},
ymin=2000,
ymax=2450,
ylabel style={font=\color{white!15!black}},
ylabel={Achievable sum-rate in Mbps},
axis background/.style={fill=white},
xmajorgrids,
ymajorgrids,
legend style={legend cell align=left, align=left, draw=white!15!black,at={(0.93,0.5)},anchor=east},
]
\addplot [color=black, line width=1.2pt, mark=square, mark options={solid, black}]
  table[row sep=crcr]{%
0	2046.26838497563\\
5	2048.17549746922\\
10	2050.35617070157\\
15	2051.36228065737\\
20	2051.5763670141\\
25	2048.29650218624\\
30	2048.77214248961\\
35	2046.87283950311\\
40	2049.17080168042\\
};
\addlegendentry{Static: TIN}

\addplot [color=mycolor1, line width=1.2pt, mark=diamond, mark options={solid, mycolor1}]
  table[row sep=crcr]{%
0	2297.94133677458\\
5	2306.83189499698\\
10	2314.18916268185\\
15	2329.94050167906\\
20	2343.75135567227\\
25	2352.04782800815\\
30	2377.76519804729\\
35	2396.9490873371\\
40	2433.19322948391\\
};
\addlegendentry{Static: RS-CMD 25}

\addplot [color=blue, line width=1.2pt, mark=triangle, mark options={solid, rotate=270, blue}]
  table[row sep=crcr]{%
0	2253.39283887284\\
5	2263.08191758939\\
10	2275.58087222217\\
15	2285.32880041086\\
20	2289.82712100135\\
25	2301.68398768455\\
30	2320.0381150315\\
35	2337.975336534\\
40	2362.00967510618\\
};
\addlegendentry{Static: RS-CMD 60}

\end{axis}
\end{tikzpicture}%
	\caption{The achievable sum-rate as a function of maximum transmission power, using static TIN and RS-CMD with $\mu= 25$ for the scenario in which a C-RAN with 7 BSs serving 28 users. Each BS has 750 Mbps backhaul. The inter-cell distance is 200 m}
    \label{Fig:5}
\end{figure}
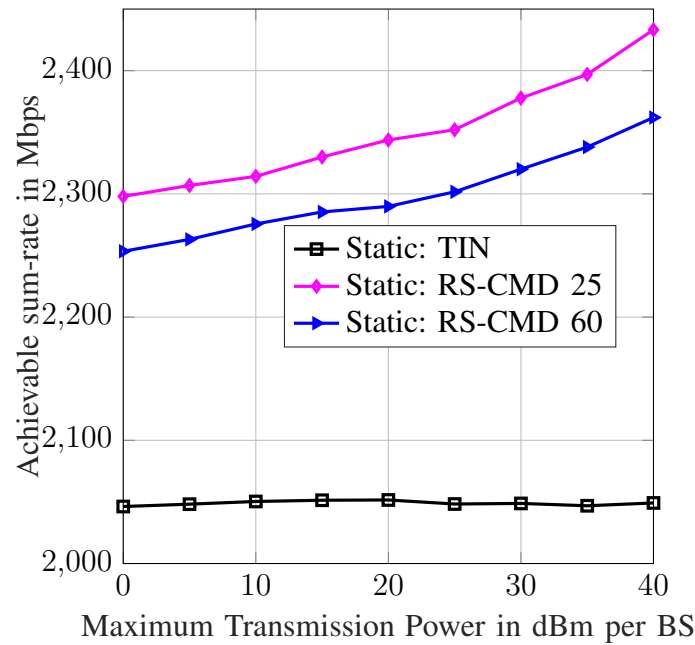
In Fig. \ref{Fig:5}, as the transmission power increases from 0 dBm to 40 dBm we see clearly that the gain of RS-CMD compared to TIN increases. In case of $\mu= 25$ the gain of RS-CMD over TIN increases from about 12\% at 0dBm maximum transmission power to almost 19\% at 40dBm. Intuitively, as the transmission power increases the interference experienced in the network increases as well. Hence, the effectiveness of RS-CMD becomes more pronounced since this method is originally designed to mitigate the interference, while using TIN becomes sub-optimal in high interference regimes.
\subsection{Convergence Behavior of Algorithm 2 and Algorithm 3}
We now illustrate the convergence behavior of Algorithms 2 and 3, as indicated in Theorem 1. We herein focus on a C-RAN system with 7 BSs serving 28 users and an inter-cell distance of 400 m. All the simulation results are averaged over 80 random realizations. In Fig. \ref{Fig:6}, we plot the objective function of problem \eqref{eq:Opt6} as a function of the number of iterations executed while implementing Algorithm \ref{Alg2}, so as to illustrate its convergence. Similarly, Fig. \ref{Fig:7} plots the objective function of problem \eqref{eq:Opt6} as a function of the number of iterations executed while implementing Algorithm \ref{Alg3}, so as to illustrate Algorithm \ref{Alg3} convergence. Both Fig. \ref{Fig:6} and Fig. \ref{Fig:7} illustrate the fast convergence of both Algorithm \ref{Alg2} and Algorithm \ref{Alg3}, respectively, which further highlight the numerical performance of our proposed algorithms.
\begin{figure}
\centering
	\begin{tikzpicture}

\begin{axis}[%
width=3.398in,
height=3.527in,
at={(0.758in,0.481in)},
scale = 1,
xmin=0,
xmax=100,
xlabel style={font=\color{white!15!black}},
xlabel={Iteration},
ymin=100,
ymax=2500,
ylabel style={font=\color{white!15!black}},
ylabel={Achievable Sum-Rate in Mbps},
axis background/.style={fill=white},
xmajorgrids,
ymajorgrids,
legend style={at={(0.97,0.03)}, anchor=south east, legend cell align=left, align=left, draw=white!15!black}
]
\addplot [color=black, line width=1.2pt, mark=diamond, mark options={solid, black}]
  table[row sep=crcr]{%
1	181.966140150589\\
2	287.223801026556\\
3	381.939248176794\\
4	436.363541624644\\
5	477.553094123384\\
6	512.567427641846\\
7	543.518573007223\\
8	568.609640826638\\
9	587.977251863106\\
10	598.85646414276\\
11	606.449508090416\\
12	612.42448352025\\
13	618.040641217932\\
14	623.576088037258\\
15	629.042487199455\\
16	634.42464795352\\
17	639.694794317544\\
18	644.763656036\\
19	649.643285853633\\
20	654.366571445915\\
21	658.960437178867\\
22	663.400206314433\\
23	667.545900771075\\
24	671.298448182968\\
25	674.734406449203\\
26	678.030526056126\\
27	681.126868577751\\
28	684.005299646463\\
29	686.641493057087\\
30	688.83163222565\\
31	690.580201207926\\
32	692.229001232103\\
33	693.842224986159\\
34	695.09535654432\\
35	696.034142392364\\
36	696.930615318343\\
37	698.10227194382\\
38	698.672025750983\\
39	699.273727156705\\
40	699.73250071434\\
41	699.73250071434\\
42	699.73250071434\\
43	699.73250071434\\
44	699.73250071434\\
45	699.73250071434\\
46	699.73250071434\\
47	699.73250071434\\
48	699.73250071434\\
49	699.73250071434\\
50	699.73250071434\\
51	699.73250071434\\
52	699.73250071434\\
53	699.73250071434\\
54	699.73250071434\\
55	699.73250071434\\
56	699.73250071434\\
57	699.73250071434\\
58	699.73250071434\\
59	699.73250071434\\
60	699.73250071434\\
61	699.73250071434\\
62	699.73250071434\\
63	699.73250071434\\
64	699.73250071434\\
65	699.73250071434\\
66	699.73250071434\\
67	699.73250071434\\
68	699.73250071434\\
69	699.73250071434\\
70	699.73250071434\\
71	699.73250071434\\
72	699.73250071434\\
73	699.73250071434\\
74	699.73250071434\\
75	699.73250071434\\
76	699.73250071434\\
77	699.73250071434\\
78	699.73250071434\\
79	699.73250071434\\
80	699.73250071434\\
81	699.73250071434\\
82	699.73250071434\\
83	699.73250071434\\
84	699.73250071434\\
85	699.73250071434\\
86	699.73250071434\\
87	699.73250071434\\
88	699.73250071434\\
89	699.73250071434\\
90	699.73250071434\\
91	699.73250071434\\
92	699.73250071434\\
93	699.73250071434\\
94	699.73250071434\\
95	699.73250071434\\
96	699.73250071434\\
97	699.73250071434\\
98	699.73250071434\\
99	699.73250071434\\
};
\addlegendentry{$C_n = 100$}

\addplot [color=red,line width=1.2pt, mark=triangle, mark options={solid, red}]
  table[row sep=crcr]{%
1	656.425174214934\\
2	827.909974083853\\
3	993.365407864216\\
4	1150.29903391684\\
5	1289.20304317259\\
6	1404.41153373411\\
7	1493.68641759502\\
8	1553.95463376597\\
9	1597.05451594734\\
10	1631.76873025099\\
11	1665.46249800949\\
12	1689.3474216564\\
13	1707.1233215944\\
14	1721.00664528429\\
15	1732.7925412606\\
16	1742.98997922343\\
17	1751.85992151798\\
18	1759.70981460488\\
19	1766.72570322686\\
20	1773.03754518376\\
21	1778.75809963877\\
22	1783.98623546406\\
23	1788.80799545729\\
24	1793.29687891299\\
25	1797.51524784488\\
26	1801.51579500866\\
27	1805.34498428542\\
28	1809.05619274219\\
29	1812.68750164985\\
30	1816.2598047202\\
31	1819.79437563788\\
32	1823.31651363878\\
33	1826.86063265839\\
34	1830.48180048195\\
35	1834.28218325348\\
36	1838.46884000871\\
37	1842.98993820867\\
38	1846.25136836409\\
39	1848.9180181686\\
40	1850.93076990641\\
41	1852.8106960475\\
42	1854.62282665516\\
43	1856.38415129359\\
44	1858.10377445483\\
45	1859.78855034646\\
46	1861.44376990039\\
47	1863.07410630698\\
48	1864.68491422803\\
49	1866.28163009114\\
50	1867.86758520996\\
51	1869.44863045497\\
52	1871.02914858707\\
53	1872.61392855712\\
54	1874.20841054821\\
55	1875.81626052292\\
56	1877.44416727759\\
57	1879.09683806893\\
58	1880.78173687473\\
59	1882.50018967732\\
60	1884.26251747062\\
61	1886.07266157589\\
62	1887.93553639141\\
63	1889.8603554431\\
64	1891.85019091169\\
65	1893.91085223548\\
66	1896.04528084406\\
67	1898.25487720461\\
68	1900.54235095106\\
69	1902.90405861834\\
70	1905.33807809298\\
71	1907.81823834214\\
72	1910.33573702698\\
73	1912.8445752798\\
74	1915.26388369205\\
75	1917.52547565945\\
76	1919.36604194939\\
77	1920.86024832199\\
78	1922.19597592034\\
79	1923.33812456475\\
80	1924.41259216432\\
81	1925.43515741555\\
82	1926.41649772055\\
83	1927.3606416473\\
84	1928.27676257638\\
85	1929.16589200183\\
86	1930.03081254219\\
87	1930.87529738696\\
88	1931.70495852952\\
89	1932.51181447159\\
90	1933.3066088062\\
91	1934.08363406579\\
92	1934.84713540037\\
93	1935.59980242264\\
94	1936.34144752271\\
95	1937.06857913499\\
96	1937.78642750377\\
97	1938.49714665866\\
98	1939.19308746293\\
99	1939.8814870688\\
};
\addlegendentry{$C_n = 450$}

\addplot [color=blue,line width=1.2pt, mark=o, mark options={solid, blue}]
  table[row sep=crcr]{%
1	860.457899244198\\
2	1029.93071638787\\
3	1187.30067034075\\
4	1330.33147352291\\
5	1451.62603299262\\
6	1549.48038419489\\
7	1624.21390068205\\
8	1677.32380484133\\
9	1714.42347812523\\
10	1741.41268857651\\
11	1762.31584903504\\
12	1778.34685840449\\
13	1790.61711002298\\
14	1800.67748533418\\
15	1809.63259091113\\
16	1817.70148108611\\
17	1825.02135098073\\
18	1831.69747562708\\
19	1837.81761240179\\
20	1843.45832007501\\
21	1848.68916093226\\
22	1853.57623861324\\
23	1858.18733076081\\
24	1862.60113326602\\
25	1866.92745786017\\
26	1871.36074936936\\
27	1876.39481153647\\
28	1883.80103880758\\
29	1896.18209172185\\
30	1903.66854899202\\
31	1908.16715985828\\
32	1911.74243209231\\
33	1915.1272890829\\
34	1918.37827058199\\
35	1921.51558920372\\
36	1924.55168082064\\
37	1927.49658867376\\
38	1930.35947014171\\
39	1933.14767294437\\
40	1935.86843264608\\
41	1938.52814146511\\
42	1941.13247466007\\
43	1943.68651042936\\
44	1946.19526550865\\
45	1948.66231436633\\
46	1951.09157881524\\
47	1953.48573723398\\
48	1955.84797658616\\
49	1958.1802247756\\
50	1960.48433587745\\
51	1962.76128797168\\
52	1965.01250050412\\
53	1967.23903260506\\
54	1969.44128979038\\
55	1971.61948736753\\
56	1973.77414138079\\
57	1975.90465459267\\
58	1978.01230669046\\
59	1980.09643154044\\
60	1982.15657796754\\
61	1984.19416889132\\
62	1986.20682120809\\
63	1988.19715696241\\
64	1990.16409973492\\
65	1992.10735743489\\
66	1994.0280435302\\
67	1995.92573137587\\
68	1997.80150107461\\
69	1999.65461021276\\
70	2001.4877037004\\
71	2003.29910382661\\
72	2005.08927101208\\
73	2006.86034989795\\
74	2006.1462368085\\
75	2010.34764357132\\
76	2012.06636844722\\
77	2013.76843545175\\
78	2015.4537587845\\
79	2017.12616626509\\
80	2018.78516765341\\
81	2020.43333323414\\
82	2022.07104571406\\
83	2023.69722573583\\
84	2025.31624573166\\
85	2026.93011346415\\
86	2028.5388523376\\
87	2030.1450740486\\
88	2031.74898115283\\
89	2033.35318644296\\
90	2034.96213763824\\
91	2036.5734751435\\
92	2038.18692642268\\
93	2039.81243768338\\
94	2041.44484768987\\
95	2043.09276776475\\
96	2044.74976553689\\
97	2046.42980837092\\
98	2048.11453551092\\
99	2049.82255207027\\
};
\addlegendentry{$C_n = 950$}

\end{axis}
\end{tikzpicture}%
	\caption{The objective function of \eqref{eq:Opt6}, using RS-CMD with $\mu= 25$ for the scenario in which a C-RAN with 7 BSs serving 28 users and an inter-cell distance of 400 m}
    \label{Fig:6}
\end{figure}
For Algorithm 3 we get the Figure 8.
\begin{figure}
\centering
	\begin{tikzpicture}

\begin{axis}[%
width=3.398in,
height=3.527in,
at={(0.758in,0.481in)},
scale = 1 , 
xmin=0,
xmax=35,
xlabel style={font=\color{white!15!black}},
xlabel={Iteration},
ymin=500,
ymax=3000,
ylabel style={font=\color{white!15!black}},
ylabel={Achievable Sum-Rate in Mbps},
axis background/.style={fill=white},
xmajorgrids,
ymajorgrids,
legend style={at={(0.97,0.03)}, anchor=south east, legend cell align=left, align=left, draw=white!15!black}
]
\addplot [color=black, line width=1.2pt, mark=diamond, mark options={solid, black}]
  table[row sep=crcr]{%
1	677.262328340869\\
2	698.915734699689\\
3	699.991916323622\\
4	699.999993550253\\
5	700.000000128646\\
6	700.000000128646\\
7	700.000000128646\\
8	700.000000128646\\
9	700.000000128646\\
10	700.000000128646\\
11	700.000000128646\\
12	700.000000128646\\
13	700.000000128646\\
14	700.000000128646\\
15	700.000000128646\\
16	700.000000128646\\
17	700.000000128646\\
18	700.000000128646\\
19	700.000000128646\\
20	700.000000128646\\
21	700.000000128646\\
22	700.000000128646\\
23	700.000000128646\\
24	700.000000128646\\
25	700.000000128646\\
26	700.000000128646\\
27	700.000000128646\\
28	700.000000128646\\
29	700.000000128646\\
30	700.000000128646\\
31	700.000000128646\\
32	700.000000128646\\
33	700.000000128646\\
34	700.000000128646\\
};
\addlegendentry{$C_n = 100$}

\addplot [color=red,line width=1.2pt, mark=triangle, mark options={solid, red}]
  table[row sep=crcr]{%
1	1832.0879086311\\
2	1902.66644345583\\
3	1931.07278414195\\
4	1944.86703453013\\
5	1954.0595729908\\
6	1961.70255226801\\
7	1968.11279217006\\
8	1973.61593257303\\
9	1978.29451281459\\
10	1982.50204925628\\
11	1986.36338013052\\
12	1989.98109605599\\
13	1993.22163073889\\
14	1996.20074409335\\
15	1998.97598627718\\
16	2001.61623084836\\
17	2004.16372408433\\
18	2006.61917080208\\
19	2008.89053369071\\
20	2010.97765404021\\
21	2013.00148684068\\
22	2015.00323872998\\
23	2015.00323872998\\
24	2015.00323872998\\
25	2015.00323872998\\
26	2015.00323872998\\
27	2015.00323872998\\
28	2015.00323872998\\
29	2015.00323872998\\
30	2015.00323872998\\
31	2015.00323872998\\
32	2015.00323872998\\
33	2015.00323872998\\
34	2015.00323872998\\
};
\addlegendentry{$C_n = 450$}

\addplot [color=blue,line width=1.2pt, mark=o, mark options={solid, blue}]
  table[row sep=crcr]{%
1	2559.61123943946\\
2	2594.95051113222\\
3	2609.49337183204\\
4	2620.31776722451\\
5	2629.76314853473\\
6	2637.8175548786\\
7	2644.95992124075\\
8	2651.39072498231\\
9	2656.9471097905\\
10	2662.07101012118\\
11	2666.80789249427\\
12	2671.24636396288\\
13	2675.43451195458\\
14	2679.39417764289\\
15	2683.14645804294\\
16	2686.7550075249\\
17	2690.18319023515\\
18	2693.47254644194\\
19	2696.65389594121\\
20	2699.71370614226\\
21	2702.64496636371\\
22	2705.47575796762\\
23	2708.21636267019\\
24	2710.84940202461\\
25	2713.37366770918\\
26	2715.8242829621\\
27	2718.20613697473\\
28	2720.52043454039\\
29	2722.77607866776\\
30	2724.97956741962\\
31	2727.12525283151\\
32	2729.23030295982\\
33	2731.29988452629\\
34	2733.32106634725\\
};
\addlegendentry{$C_n = 950$}

\end{axis}
\end{tikzpicture}%
	\caption{The objective function of \eqref{eq:Opt6}, using RS-CMD with $\mu= 25$ for the scenario in which a C-RAN with 7 BSs serving 28 users and an inter-cell distance of 400 m}
    \label{Fig:7}
\end{figure}
\subsection{The impact of the Number of Users}
Last but not least, we examine the impact of increasing the number of users in the network on the achievable performance. {\color{black} In Fig. \ref{Fig:8} we clearly see that dynamic RS-CMD with $\mu= 25$ outperforms the dynamic TIN. As the number of users increases, the RS-CMD gain improves over TIN. Interestingly, as the number of users approaches the total number of transmit antennas, the gain becomes larger, i.e., when the number of users is 25 in this example. After that, when the number of users exceeds the number of transmit antennas, the achievable sum-rate by dynamic TIN saturates earlier than the dynamic RS-CMD, which further highlights the important role of joint rate splitting and common message decoding in dense networks.}
\begin{center}
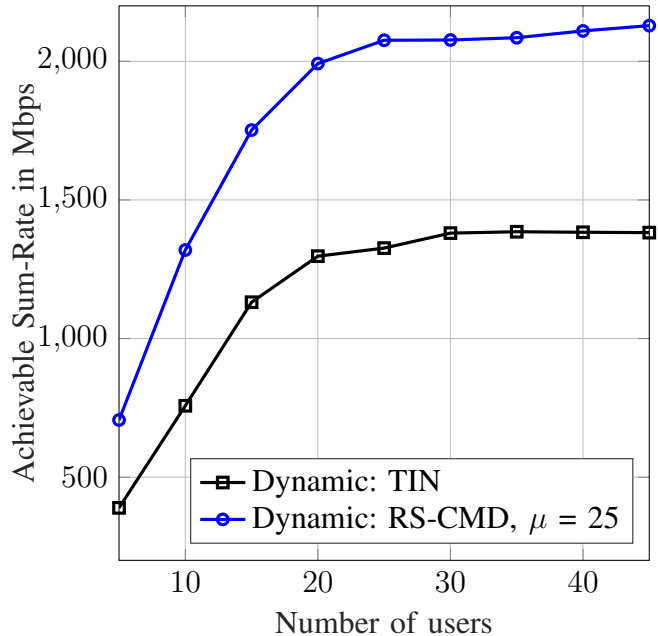
\begin{figure}
%
%
\begin{tikzpicture}

\begin{axis}[%
width=3.398in,
height=3.527in,
at={(0.758in,0.481in)},
scale = 1 , 
xmin=5,
xmax=45,
xlabel style={font=\color{white!15!black}},
xlabel={Number of users},
ymin=200,
ymax=2200,
ylabel style={font=\color{white!15!black}},
ylabel={Achievable Sum-Rate in Mbps},
axis background/.style={fill=white},
xmajorgrids,
ymajorgrids,
legend style={at={(0.97,0.03)}, anchor=south east, legend cell align=left, align=left, draw=white!15!black}
]
\addplot [color=black, line width=1.2pt, mark=square, mark options={solid, black}]
  table[row sep=crcr]{%
5	389.405315776402\\
10	757.564891257741\\
15	1131.05046560022\\
20	1297.16638499173\\
25	1326.3215770004\\
30	1380.36017376183\\
35	1385.236518771\\
40	1383.42191198398\\
45	1382.11050193905\\
};
\addlegendentry{Dynamic: TIN}

\addplot [color=blue, line width=1.2pt, mark=o, mark options={solid, blue}]
  table[row sep=crcr]{%
5	706.168875384912\\
10	1319.30282874843\\
15	1752.05708213055\\
20	1991.84404661629\\
25	2075.96162333486\\
30	2076.94328309898\\
35	2085.18556755186\\
40	2109.64003687744\\
45	2128.72825638294\\
};
\addlegendentry{$\text{Dynamic: RS-CMD, }\mu\text{ = 25}$}

\end{axis}

\begin{axis}[%
width=5.933in,
height=4.692in,
at={(0in,0in)},
scale only axis,
xmin=0,
xmax=1,
ymin=0,
ymax=1,
axis line style={draw=none},
ticks=none,
axis x line*=bottom,
axis y line*=left,
legend style={legend cell align=left, align=left, draw=white!15!black}
]
\end{axis}
\end{tikzpicture}%
	\caption{The achievable sum-rate, using dynamic RS-CMD with $\mu= 25$ and dynamic TIN for the scenario in which a C-RAN with 7 BSs serving 28 users. Each BS has 4 Antenna. The inter-cell distance is 200 m}
    \label{Fig:8}
\end{figure}
\end{center}
\section{Conclusions}
This paper amalgamates the benefits of RS in C-RAN for enabling large-scale interference management. We have proposed a transmission scheme for a C-RAN which capitalizes on rate-splitting, common message decoding, beamforming vector design and clustering to mitigate interference and appropriately use the limited backhaul and transmit power resources. For the proposed scheme, we formulated the problem of maximizing the weighted sum-rate subject to finite backhaul capacity and transmit power constraints. We have proposed a solution using $l_0$ relaxation followed by an ICA framework. Simulations show that the RS scheme outperforms the conventional private-information transmission approach. The gain is more significant in dense networks as well as in interference limited regimes. {\color{black}Besides, we show the benefits of joint clustering and RS mode design in enabling a better use of backhaul resources in C-RAN. This suggest RS-CMD techniques can improve the performance significantly in large and dense wireless networks.}
\bibliographystyle{IEEEtran}
\bibliography{bibliography}
\balance
\end{document}